\documentclass[aps,floatfix,amsmath,twocolumn,notitlepage,nofootinbib,superscriptaddress,pra]{revtex4-2}
\usepackage{enumitem}

\usepackage[toc,page,header]{appendix}

\usepackage[ruled]{algorithm2e}
\SetKwFor{For}{for (}{) $\lbrace$}{$\rbrace$}
\usepackage{amssymb}
\usepackage{graphicx}
\usepackage{graphics}
\usepackage{amsmath}
\usepackage{amsthm}
\usepackage{color}
\usepackage{dsfont}
\usepackage{wrapfig}

\usepackage{longtable}

\newcommand{\ketbra}[2]{|#1\rangle\!\langle#2|}

\usepackage{mathtools}
\usepackage{centernot}
\usepackage[colorlinks=true, linkcolor=blue, citecolor=red]{hyperref}
\usepackage{verbatim}
\usepackage{xcolor}

\usepackage{multirow}
\usepackage{booktabs} 
\usepackage{tabularx}
\usepackage{makecell}
\usepackage{wrapfig}
\usepackage{array}
\usepackage{pifont}

\usepackage{thmtools}
\usepackage{thm-restate}

\newcolumntype{Y}{>{\centering\arraybackslash}X}
\newcolumntype{C}[1]{>{\centering\let\newline\\\arraybackslash\hspace{0pt}}m{#1}}

\newcommand{\impliesandnot}{%
  \mathrel{%
    \vcenter{\offinterlineskip
      \ialign{##\cr$\impliedby$\cr\noalign{\kern-1.5pt}$\centernot\implies$\cr}%
    }%
  }%
}

\usepackage{color}

\def\>{\rangle}
\def\<{\langle}

\renewcommand{\geq}{\geqslant}
\renewcommand{\leq}{\leqslant}

\newtheorem{theorem}{Theorem}
\newtheorem*{theorem*}{Theorem}
\newtheorem{corollary}{Corollary}
\newtheorem{lemma}{Lemma}
\newtheorem*{lemma*}{Lemma}

\newtheorem{definition}{Definition}[section]
\newtheorem*{definition*}{Definition}

\newtheorem{example}{Example}
\theoremstyle{definition}

\theoremstyle{remark}

\theoremstyle{definition}

\newtheorem*{axiom*}{Axiom}

\newcommand{\xtens}{\mathbin{\mathop{\otimes}\limits_{\max}}}
\newcommand{\ntens}{\mathbin{\mathop{\otimes}\limits_{\min}}}

\newcommand{\tr}{{\rm Tr}}

\newcommand{\ve}[1]{{\left\vert\kern-0.25ex\left\vert\kern-0.25ex\left\vert #1 
    \right\vert\kern-0.25ex\right\vert\kern-0.25ex\right\vert}}

\newcommand{\ket}[1]{|#1\rangle}

\newcommand{\Det}{\mathsf{Det}}

\definecolor{cool_green}{rgb}{0.0, 0.5, 0.0}

\begin{document}

\title{No extension of the Quantum Tensor Product admits a Superposition principle}

\author{Vincenzo Fiorentino}
\email{vincenzo.fiorentino@york.ac.uk}
\affiliation{Department of Mathematics, University of York, York YO10 5GH, United Kingdom}

\author{Kuntal Sengupta}
\thanks{This work was partially conducted at the author's previous affiliations: the University of York and Institute Néel, CNRS. \href{kuntal.sengupta@neel.cnrs.fr}{kuntal.sengupta@inria.fr}}
\affiliation{Univ. Grenoble Alpes, Inria, 38000 Grenoble, France}

\begin{abstract}
Textbook quantum superposition refers to the feature that certain linear combinations of Hilbert space rays, each representing a valid quantum state, are themselves valid states. This notion is not operational, and it relies on the underlying Hilbert space formalism. Recent proposals for experimental tests of indefinite causal order, as well as tests probing the non-classicality of gravity, pivot on superposition, thereby calling for a theory-independent, operational formalisation of the concept. Here, we define superposition within the framework of Generalised Probabilistic Theories, based on observed statistics in prepare-and-measure experiments. Using this, we formulate three superposition principles to investigate which structural features of quantum theory carry over to other theories. We study conditions under which these principles carry over from subsystems to their compositions; to this end, we show that the quantum tensor product emerges as the largest composition rule for quantum systems respecting all three principles. Furthermore, we show how non-classical features such as entanglement and preparational uncertainty can be viewed as special forms of superposition.

\end{abstract}

\maketitle

\section{Introduction}

Textbook quantum mechanics represents quantum states as rays of a Hilbert space. That every such ray can be expressed as a complex linear combination of rays representing other pure states is referred to as \textit{quantum superposition}. It plays a central role in contrasting quantum theory from classical theory and has been at the heart of celebrated quantum algorithms, kicking off the field of quantum computing. In recent years, experiments based on superposition have been proposed to test non-classicality gravity~\cite{marletto_gravitationally_2017,PhysRevLett.119.240401}. Current experimental certification of the lack of definite causal order within quantum theory also relies on quantum superposition~\cite{procopio2015,rubino2017,Rubino2020,Rubino2022experimental,guo2020experimental,Cao2022semiDI,Zhu2023,Qu_2025,goswami2018communicating,yin2023experimental,Richter_2025}. 
However, what (if any) objective description a quantum state given by a ray in a Hilbert space corresponds to remains unclear, and consequently, so does that of a complex linear combination of such rays. 
In fact, in the absence of the Hilbert space formalism, which observational aspects one should attribute to the presence of superposition in quantum theory remains a crucial open question in quantum foundations, answering which will not only allow one to explore superposition in theories beyond quantum mechanics, but also investigate whether superposition can be used as a necessary principle to reconstruct quantum theory.

A quantum state can be viewed as a superposition of a collection of other states whenever there exists a measurement that can perfectly discriminate this collection of states but returns probabilistic outcomes with respect to the first. This \textit{relational} property can be phrased in the framework of Generalised Probablistc Theories (GPTs), where every experiment can be broken down into state preparations, transformations and measurements. 
The existence of states in a GPT fulfilling this relational property, hence displaying the same statistical behavior in the state-discriminating measurement described above as a quantum state in a superposition of other quantum states, defines what we call the existence of superposition in a GPT. We classify a variety of GPTs based on whether they admit this operational notion of superposition. 

This differs from previously introduced notions of superposition in that it features superposition as a specific type of non-classicality beyond non-simpliciality, is relational, unlike in~\cite{aubrun_entanglement_2022}, and it is applicable to any GPT, unlike in~\cite{dariano_classical_2020}, which requires infinitely many extremal states.

Given that superposition exists, one can ask whether it can be used to identify structural properties of a theory. For instance, in quantum theory, (i) all extremal states can be expressed as a superposition of some other states, or (ii) given any measurement basis, any extremal state that is not deterministic with respect to it is in superposition of the collection of extremal states that are perfectly discriminated by it, or (iii) if an extremal state is a superposition of some collection of extremal states, every state in that collection is also a superposition of another collection that includes the original state.  Inspired by these structural properties of quantum theory, we introduce three \textit{principles} of superposition: \textit{complete}, \textit{uniform}, and \textit{mutual}, and show that they do not form a hierarchy. One might ask if these principles can play a role in the reconstruction of quantum theory on more physical grounds.
We manage to provide an affirmative answer to this at the level of the composition rule: any non-signalling composition of quantum subsystems strictly larger than the quantum tensor product will not admit the principle of mutual superposition. This principle, therefore, forbids any strict non-signalling extension of the quantum tensor product. 

Our operational formulation also reveals a common structure underlying superposition, preparational uncertainty and entanglement, the latter two of which are typically studied separately. We show that preparational uncertainty necessarily implies superposition and, under suitable sharpness assumptions, leads directly to stronger superposition principles. For the class of composite theories considered here, entanglement likewise entails superposition, with extremal entangled states appearing as superpositions relative to perfectly distinguishable product states. These results recast two central signatures of non-classicality as particular manifestations of operational superposition.

This paper is organised as follows. Section.~\ref{Sec::Preliminaries} provides a brief review of the GPT framework and introduces basic concepts. Section.~\ref{Sec::SuperpositionQT} discusses superposition in quantum theory which is then generalised to GPTs in Section.~\ref{Sec::SuperpositionGPTs}. Section.~\ref{Sec::SuperpositionCompositeSystems} focuses on how a composite system inherits superposition from its subsystems and and how entanglement implies superposition. In Section~\ref{Sec::MaximalityofQuantumTensorProduct} we follow on to single out the quantum tensor product from larger composition rules. In Section.~\ref{Sec::RelationToUncertainty} we show how preparational uncertainty can be viewed as a special form of superposition. Finally, Section.~\ref{Sec:Discussion} provides a summary and an outlook. All proofs are provided in the Appendix.

\section{Preliminaries} \label{Sec::Preliminaries}
GPTs are a framework to analyse theories in which experiments can be decomposed into preparations, transformations and measurements of systems. In a GPT, the set of states, i.e.\ the \textit{state space}, $\mathcal{S}$, is given by a compact convex subset of a real vector space $\mathbb{V}$, such that for some fixed element $u \in \mathbb{V}^*$, $ u(s) = 1$, for all $s \in \mathcal{S}$, where $\mathbb{V}^*$ is the dual vector space. The \textit{effect space}, $\mathcal{E}$, is a compact convex subset of $\mathbb{V}^*$ with the property that for every $e \in \mathcal{E}$, $e(s)\coloneqq \left<e,s\right> \in [0,1]$ for every $s \in \mathcal{S}$,  where $\left<\cdot,\cdot\right>$ is an inner product on $\mathbb{V}$ and $u \in \mathcal{E}$. In addition, for every $e \in \mathcal{E}$, $u - e \in \mathcal{E}$. When $\mathcal{E}$ is the largest subset of $\mathbb{V}^*$ compatible with these conditions, the GPT is said to satisfy the \textit{no-restriction hypothesis}. We will denote the largest effect space compatible with the state space $\mathcal{S}$ as $\mathcal{E_S}$. A state (effect) is said to be \textit{extremal} if it cannot be expressed as a convex combination of other states (effects). We denote by $\mathrm{Ext}[\mathcal{S}]$  and $\mathrm{Ext}[\mathcal{E}]$ the sets of extremal states and effects, respectively.

Given two state spaces $\mathcal{S}_1 \subset \mathbb{V}_1$ and $\mathcal{S}_2 \subset \mathbb{V}_2$, a bipartite state space, $\mathcal{S}_1 \boxtimes \mathcal{S}_2 \subset \mathbb{V}_1 \otimes \mathbb{V}_2$ (shorthanded to $\mathcal{S}_{1,2} $), is a collection of states identified with a composition rule $\boxtimes$ with the property that for every $s \in \mathcal{S}_{1,2}$, $u_1 \otimes \mathrm{id}_2 (s) \in {\mathcal{S}_{2}}_{\leq}$ and $\mathrm{id}_1 \otimes u_2   (s) \in {\mathcal{S}_{1}}_{\leq}$, where $u_1,\mathrm{id}_1$ and $u_2,\mathrm{id}_2$ are the unit effect and identity map pair on the first and second systems respectively and $\leq$ denotes the subnormalised state space characterised by the convex hull of a state space with the zero vector. Two composition rules of interest are the \textit{minimal} and \textit{maximal} tensor products. The minimal composition is defined as 
$$
 \mathcal{S}_1\ntens \mathcal{S}_2 \coloneqq \mathrm{ConvHull}\{s_1 \otimes s_2|s_1 \in \mathcal{S}_1, s_2 \in \mathcal{S}_2
    \}.
$$
The maximal $\mathcal{S}_1  \xtens \mathcal{S}_2$ composition is the largest bipartite state space compatible with all effects in the minimal composition of the respective effect spaces and satisfying normalisation. A bipartite compostion rule $\boxtimes$ is said to be non-signalling if for any two state spaces $\mathcal{S}_1$ and $\mathcal{S}_2$, $\mathcal{S}_1\ntens \mathcal{S}_2 \subseteq \mathcal{S}_{1,2} \subseteq \mathcal{S}_1\xtens \mathcal{S}_2$.

A collection $\mathsf M \coloneqq \{e_j\}_j$ of effects is said to be a \textit{measurement} iff $\sum_j e_j = u$. A collection of extremal states $\mathsf{D} \coloneqq \{s_i\}_i$ is said to be \textit{perfectly discriminable} (PD) if and ony if there exists a measurement $\{e_j\}_j \subseteq \mathrm{Ext}[\mathcal{E}]$, such that $\left<e_j,s_i\right> = \delta_{i,j}$, where $\delta_{i,j} = 1$ when $i=j$ and 0 otherwise. $\mathsf{D}$ is said to be a \textit{maximal} PD (MPD) set iff for any other PD set of states $\mathsf{D}'$, $|\mathsf D| \geq |\mathsf D'|$ holds, where $|\mathsf D|$ denotes the cardinality of the set $\mathsf D$. The cardinality of an MPD set is known as the \textit{measurement} or \textit{operational dimension} of the system~\cite{brunner2014dimension, sen_2022_timelike}. For classical theory the operational dimension coincides to the number of extremal states. A measurement $\mathsf M \subseteq \mathrm{Ext}[\mathcal{E}]$ that perfectly discriminates an MPD set $\mathsf{D}$ and such that $|\mathsf{M}|=|\mathsf{D}|$ is said to be a \textit{maximally distinguishing and extremal} (MDE) measurement. We denote by $\mathsf{Det(M)}$ the set of all extremal states, $s$, such that $e(s)=1$ for some $e \in \mathsf{M}$.

In this work, we assume that the set of MDE measurements is \textit{informationally complete}; that is, every state in the theory can be tomographically reconstructed using only the statistics obtained from MDEs.

Qubit quantum theory can be phrased as a GPT by identifying its state space with the set of $2 \times 2$ unit-trace positive semi-definite matrices, a compact convex subset of the real vector space, $\mathbb{H}(\mathbb{C}^2)$, of $2 \times 2$ Hermitian matrices. An  effect, $\Pi$, is a $2 \times 2$ positive semi-definite matrix, such that $\mathds{1}-\Pi$ is also positive semi-definite, where $\mathds{1}$ is the $2 \times 2$ identity matrix. The effect space is the largest set of such matrices with $\mathds{1}$ as the unit effect, and forms a compact convex subset of  $\mathbb{H}(\mathbb{C}^2)$. The action of the effects on the states is given by the Hilbert-Schmidt inner product.  Qudit quantum theory can be understood in a similar way by identifying its state space with the compact convex subset, of $d \times d$ unit-trace positive semi-definite matrices, of the real vector space of $d \times d$ Hermitian matrices, and so on. The operational dimension of a qudit is equal to $d$ \cite{sen_2022_timelike}. Given two quantum systems whose state spaces are the set of unit-trace positive semi-definite matrices in the vector spaces $\mathbb{H}(\mathbb{C}^{d_1})$ and $\mathbb{H}(\mathbb{C}^{d_2})$, the composite state space is given by the set of unit-trace positive semi-definite matrices in the vector space $\mathbb{H}(\mathbb{C}^{d_1d_2})$. We will refer to this as the \textit{quantum tensor product} composition of two quantum systems.

In this work we investigate a variety of GPTs other than quantum and classical theory. We consider \textit{gbits} which are characterised by two fiducial measurements with two outcomes each ($p(0|0),p(0|1)$)---the probability of the second outcome can be trivially calculated and is redundant. The vertices of the state space of gbits correspond to $(0,0)$, $(0,1)$, $(1,0)$ and $(1,1)$. Systems characterised by more fiducial measurements with two or more number of outcomes are called \textit{g-systems}. The vertices of the state spaces for g-systems are obtained in a similar way as gbit state spaces. The minimal tensor product of g-systems is called Generalised Local Theory (GLT)~\cite{PhysRevA.75.032304}. The maximal tensor product of such systems is called Generalised Non-Signalling Theory~\cite{PhysRevA.75.032304} or more commonly Boxworld~\cite{Short_2010}. We also consider single-system state spaces whose geometry is given by regular polygons, providing approximations to the real qubit state space, i.e.\ the Bloch disc~\cite{Janotta_2011}. In addition, we also construct single-system state spaces corresponding to fragments of quantum theory. We call them GPT-1 and GPT-2 and outline their details in Appendix~\ref{Appendix::GPT1_2}.

\section{Superposition in Quantum Theory} \label{Sec::SuperpositionQT}

An extremal (pure) qubit state $\Psi$ is a rank-1 $2 \times 2$ matrix and therefore $\ket{\Psi}\rangle \coloneqq \ketbra{\psi}{\psi}$, where $\ket{\psi} \in \mathbb{C}^2$ is its eigenvector with unit eigenvalue. Similarly, every extremal qudit state can also be represented by a vector in $\mathbb{C}^d$. The notion of superposition in quantum theory is inherently connected to this mathematical property. Given a finite-dimensional Hilbert-space $\mathcal{H} \cong \mathbb{C}^d$, and collection of extremal linearly independent qudit states $\mathsf{D} = \{\ket{\Phi_i}\rangle \}_i^{d' \leq d}$, an extremal qudit with state $\ket{\Psi}\rangle $ is said to be in superposition of states in $\mathsf{D}$ whenever there exists a collection of complex numbers $\{c_i\}_{i=1}^{d'}$, with $c_i \notin \{0,1\}$ for any $i$ and $\sum_{i=1}^{d'} |c_i|^2=1$, such that
\begin{equation}  \label{Equation::TextbookSuperposition}
    \ket{\psi}= \sum_{i=1}^{d'} c_i \ket{\phi_i} \ . 
\end{equation}  
In fact, for any $\ket{\psi}$, the pair of collection $\mathsf{D}$ and $\{c_i\}_{i=1}^{d'}$ can always be found with $d'=d$. Textbook superposition is pertained to quantum states admitting Eq.~\eqref{Equation::TextbookSuperposition}, often associated to the analogy of superposition in classical wave mechanics. However, unlike waves in classical wave mechanics, vectors of the form $\ket{\psi}$ cannot be observed. Therefore, it is natural to ask whether the notion of superposition in quantum mechanics is an artifact of the Hilbert space formalism or one can associate observable properties to it. In this work, we indicate that it is possible to operationally define superposition as constraints on the statistics observed in an experiment, in a theory-independent way. If one then assumes that the underlying theory is quantum mechanics, Eq.~\eqref{Equation::TextbookSuperposition} can be derived.

To do this, let us analyse the textbook superposition in quantum theory phrased as a GPT. Consider an experiment where one needs to perfectly discriminate states in the set $\mathsf{D}=\{\ket{\Phi_i}\rangle \}_{i=1}^d$. This is indeed possible whenever $\tr[\ket{\Phi_i}\rangle \ket{\Phi_j}\rangle] = \delta_{i,j}$, using the measurement $\mathsf{M}=\mathsf{D}$. In addition, $\mathsf{D}$ is an MPD set and $\mathsf{M}$ its MDE measurement. Then, for any state $\ket{\Psi}\rangle$ such that $\mathsf{c}_i \coloneqq \tr[\ket{\Phi_i}\rangle \ket{\Psi}\rangle] \in (0,1)$ for all $i$, we can find complex numbers $\{c_i\}_i$ with the property that $|c_i|^2= \mathsf{c}_i$ and Eq.~\eqref{Equation::TextbookSuperposition} holds. Since $\{\mathsf c _i\}_i$ represents a probability distribution, $\sum_{i=1}^{d} |c_i|^2=\sum_{i=1}^{d} \mathsf{c}_i =  1$. In fact, every $\ket{\Psi}\rangle \notin \mathsf{D}$ admits this. Quantum superposition, therefore, is a property of the state $\ket{\Psi}\rangle$ with respect to the MPD set $\mathsf{D}$.

An operational view of the scenario above is the following: Suppose an experimenter has access to a measurement device $\mathsf{M}=\{ e_i \}_{i=1}^d$ that can perfectly distinguish some MPD set $\mathsf{D} = \{s_j\}_{j=1}^d$, i.e., $e_i(s_j)=\delta_{ij}$. Another device prepares a state $r$ which is observed to be probabilistic with respect to this measurement. If this preparation device were sampling from the set $\{s_1,\dots,s_d\}$, the observed statistics would have a classical explanation as a statistical mixture. However, if the experimenter can rule out this possibility, for instance, by certifying that $r$ is an extremal state, then there must be a different ``phenomenological explanation'' for this probabilistic behaviour. In quantum theory, one would call this  superposition.

\section{Superposition in GPTs} \label{Sec::SuperpositionGPTs}

The operational prescription presented above allows us to define superposition in a theory-independent way, thus extending its applicability beyond quantum mechanics to arbitrary GPTs. We formalise superposition as a statistical property of an extremal state relative to an MPD set--—that is, as a \textit{relational property} of a state with respect to other states. This aligns with how superposition is generally regarded in standard quantum theory, where the state $\ket{\psi}$ in Eq.~\eqref{Equation::TextbookSuperposition} is understood as a superposition \textit{of the states} $\{\ket{\phi_i}\}_i$. Previous attempts at abstracting away from the Hilbert space formalism are either not operational~\cite{aubrun_entanglement_2022} or do not apply to arbitrary GPTs~\cite{dariano_classical_2020}. In~\cite{sengupta2024achievingmaximalcausalindefiniteness}, the author proposed an operational notion of when one can infer that a GPT admits superposition. However, this notion is insufficient to capture the relational aspect mentioned above. By contrast, we provide a criterion on the statistics of a prepare-and-measure experiment which, in standard quantum theory, is satisfied if there exist a collection of extremal states and an extremal state outside this collection that is in a \textit{superposition} of the collection. Such a prepare-and-measure scenario can be defined in any theory, allowing us to abstract this relational property, and define superposition to arbitrary theories.

\begin{definition}[Superposition] 
    Let $(\mathcal{S},\mathcal{E})$ be a pair of state and effect spaces. Let $\mathsf{D} \coloneqq\{s_i\}_{i=1}^n$ be an MPD set and $\mathsf{M} \coloneq \{e_i\}_{i=1}^n$ an MDE measurement of $\mathsf{D}$. Let $r \in \mathrm{Ext}[\mathcal{S}]$ be a state with the property that for some $\mathsf{M}' \subseteq \mathsf{M}$, $e_j (r) \in (0,1)$ holds for all $e_j \in \mathsf{M}'$, and $\sum_{e_j \in \mathsf{M}'}  e_j( r) =1$. Then, $r$ is a superposition of the states in the set $\mathsf {D}' \coloneqq \{s \in \mathsf D | e_j(s) =1, e_j \in \mathsf{M}' \}$. 
  
    \label{Def::Superposition}
\end{definition}

Note first, that if we drop the requirement that $\mathsf{D}$ is a maximal set, then although one could infer that $r$ is a superposition, the set of states with respect to which it is a superposition may be falsely concluded. The following qutrit example captures this: the non-maximal PD set $\mathsf{D}= \{\ket{0}\rangle, \ket{2}\rangle\}$ is perfectly discriminated by the degenerate measurement $\mathsf{M}=\{\ket{0}\rangle, \mathbb{I}_3 - \ket{0} \rangle \}$. Without the maximality requirement, a state $\ket{\psi}\rangle$ such as $\ket{\psi} = \alpha \ket{0} + \beta \ket{1} + \gamma \ket{2}$ with $\alpha,\beta,\gamma \neq 0$ would oddly be labeled as a superposition of $\ket{0}\rangle$ and $\ket{2}\rangle$ \textit{only}, despite its non-zero inner-product with $\ket{1}\rangle$. Second, superposition is defined in this work only for extremal states. This is because superposition is only attributed to extremal states in quantum theory. We discuss in Appendix~\ref{Appendix::PlavalaSup} a notion of superposition in the absence of extremality and its consequences.

Rather than having a relational property between one state and a collection of states, a next natural question is whether this relation can be used to identify structural properties of a theory. To do this, we first notice that given the axioms of textbook quantum mechanics, the following statements become equivalent: (i) all pure states can be expressed as a superposition of some other pure states,  (ii) given any measurement basis, any pure state that is not deterministic with respect to it is in superposition of the collection of pure states that are perfectly discriminated by it, (iii) if a pure state is a superposition of some collection of pure states, every state in that collection is also a superposition of another collection that includes the original state. Despite of being equivalent for quantum theory, they might pertain to inequivalent structural features for arbitrary GPTs. Inspired from this observation we formulate three \textit{principles} of superposition that encode facts about superposition applicable to the whole theory.

\begin{definition}[Existence and Principles of Superposition]
    Let $(\mathcal{S},\mathcal{E})$ be a state and effect pair of a GPT. Superposition \textbf{exists} iff some $s \in \mathrm{Ext}[\mathcal{S}]$ is a superposition, following Definition~\ref{Def::Superposition}. Further, this superposition is
     \begin{itemize}
        \item \textbf{complete}  iff every $s \in \mathrm{Ext}[\mathcal{S}]$ is a superposition;
        \item \textbf{uniform}  iff for any MPD set $\mathsf{D}$, every $s \in \mathrm{Ext}(\mathcal{S}) \setminus \mathsf{D}$ is a superposition of some $\mathsf{D}'_s \subseteq \mathsf{D}$;
        \item \textbf{mutual} iff whenever $s \in \mathrm{Ext}[\mathcal{S}]$ is a superposition of $\mathsf{D}'$, every $r \in \mathsf{D}'$ is a superposition of some fixed $\mathsf{D}''$, such that $s \in \mathsf{D}''$.  
        
    \end{itemize}
    \label{Def::SuperpositionPrinciples}
\end{definition}

Definitions~\ref{Def::SuperpositionPrinciples} apply to both single and composite systems. 
Table \ref{Tab::SuperpositionPrinciples} summarises the superposition properties of several theories, with the mathematical derivations deferred to Appendix \ref{Appendix: superposition GPT examples}.

\begin{table}[ht]
    \centering
    
  \includegraphics[width=1.0\linewidth]{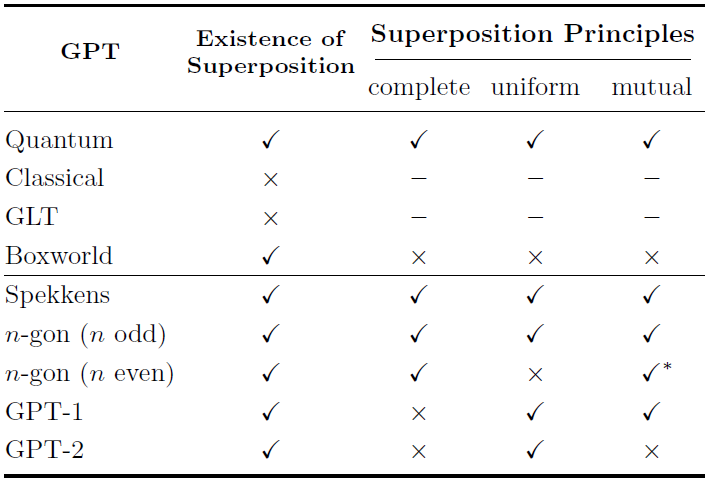}
    \caption{Classification of superposition across various GPTs. Theories above the horizontal line include both single and composite systems, whereas for those below the line we only examine individual systems. Here, $n$-gon refers to regular polygons with $n>4$ sides and $\checkmark^*$ denotes that the property does not hold when $n=6$. GPT-1 and GPT-2 are models defined in Appendix \ref{Appendix::GPT1_2}. The symbol $\textbf{--}$ indicates that the principles are not applicable.} 
    \label{Tab::SuperpositionPrinciples}
\end{table}

 Classical theory, characterised by a simplicial state space and no-restriction hypothesis, fails to exhibit any form of superposition. However, the lack of classicality is not a signature of existence of superposition. Single system gbit state spaces or their minimal tensor product composition (GLT) do not admit any form of superposition.
 Interestingly, Spekken's toy theory \cite{spekkens_evidence_2007}, introduced to motivate  the classicality of interference phenomena with epistemic restrictions, admits all superposition principles. In addition, superposition can exist in composite systems even when it is is absent in the subsystems, as captured by Boxworld where, however, none of the three stronger princples hold. A more general discussion on superposition in composite systems is presented in Section.~\ref{Sec::SuperpositionCompositeSystems}.

The table also demonstrates that the three types of superposition do not form a strict hierarchy. One can construct GPTs that admit uniform and mutual superposition but that do not admit complete superposition (see GPT-1 in Appendix~\ref{Appendix::GPT1_2}), or that only admit uniform superposition (see GPT-2 in Appendix~\ref{Appendix::GPT1_2}).

Furthermore, the $n$-gon models display a non-quantum feature of superposition as per Definition~\ref{Def::Superposition}, where a state can be a superposition of a set while remaining perfectly discriminable from some of its members (see Appendix~\ref{Appendix:RegularNgons}). We also observe that real qubit quantum theory inherits the full suite of superposition principles from the class of odd $n$-gons (inner approximations of the Bloch disc). In contrast, even $n$-gons (outer approximations of the Bloch disc) fail to uphold uniform superposition, establishing a parity-dependent structural gap that vanishes only in the continuous limit. Whether there is a higher principle underlying such explanation is not known to us.

\section{From Subsystems to Arbitrary Compositions} \label{Sec::SuperpositionCompositeSystems}

Superposition, as defined in Definition~\ref{Def::Superposition}, requires an MDE and an extremal state that is probabilistic with respect to its outcomes. Now, if two systems which (at least one of them) admit superposition are composed, a natural question is whether superposition will exist in this composition as well. One therefore first needs to characterise the MDEs of the composition. In quantum theory, the operational dimension of a composite system $(\mathcal{S_Q}_{1,2}, \mathcal{E_Q}_{1,2})$ equals the product of the operational dimensions of its components (see Lemma~\ref{lemma::multiplicityquantum} in Appendix~\ref{Appendix: quantum tensor product and superposition} for a more general statement). Consequently, the tensor product $\mathsf{M}_1 \otimes \mathsf{M}_2 \coloneqq \{ e_i \otimes f_j \}_{ij}$ of the local MDEs $\mathsf{M}_1 = \{ e_i \}_i$ and $\mathsf{M}_2 = \{ f_j \}_j$ constitutes a valid MDE for the composite system. A sufficient condition for this property was provided in~\cite{sen_2022_timelike}, with the minimal and maximal products of quantum systems serving as key examples. However, it does not hold in general. A counterexample is any non-signalling tensor product of the state spaces of two regular pentagons (see Appendix~\ref{Appendix::MinProductPentagons}).

Whenever product MDEs are MDEs for the composite system, i.e.\ given that the operational dimension of the composition is the product of the operational dimensions of its subsystems, we show that existence of superposition and the principle of complete superposition carries over from subsystems to composites. For uniform superposition to carry over, we needed to assume \textit{outcome sharpness of MDEs}: for every effect $e$ in an MDE there exists a unique extremal state $s$ such that $e(s)=1$, which holds in both quantum and classical theories.

\begin{restatable}{lemma}{SuperpositionFromSubsystems}
    Let $(\mathcal{S}_1, \mathcal{E}_{\mathcal{S}_1})$ and $(\mathcal{S}_2, \mathcal{E}_{\mathcal{S}_2})$ be two GPT systems with operational dimensions $d_1$ and $d_2$ respectively. Let $(\mathcal{S}_{1,2},\mathcal{E}_{\mathcal{S}_{1,2}})$ be a no-sigalling composition such that the operational dimension is $d_1d_2$. Then,
    \begin{enumerate}[label=(\roman*), font=\normalfont]
        \item If $(\mathcal{S}_1,\mathcal{E}_{\mathcal{S}_1})$ satisfies the existence of superposition, then $(\mathcal{S}_{1,2},\mathcal{E}_{\mathcal{S}_{1,2}})$ also satisfies the existence of superposition;
        \item If $(\mathcal{S}_1,\mathcal{E}_{\mathcal{S}_1})$ satisfies complete superposition, then $(\mathcal{S}_{1,2},\mathcal{E}_{\mathcal{S}_{1,2}})$ also satisfies complete superposition;
        \item If  $(\mathcal{S}_1,\mathcal{E}_{\mathcal{S}_1})$, $(\mathcal{S}_2,\mathcal{E}_{\mathcal{S}_2})$ and $(\mathcal{S}_{1,2},\mathcal{E}_{1,2})$ satisfy outcome sharpness of MDEs and $(\mathcal{S}_1,\mathcal{E}_1)$ satisfies uniform superposition, then $(\mathcal{S}_{1,2},\mathcal{E}_{\mathcal{S}_{1,2}})$ satisfies uniform  superposition.
    \end{enumerate}
    \label{Thm::SuperpositionSubsystems}
\end{restatable}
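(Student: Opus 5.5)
The three parts share one construction. We build composite MDE measurements as tensor products of local MDEs, and composite witnesses of superposition as products of extremal local states. The hypothesis that the operational dimension of $\mathcal{S}_{1,2}$ equals $d_1d_2$ is used in the following way. If $\mathsf{D}_1=\{s_i\}_i$ and $\mathsf{D}_2=\{t_j\}_j$ are MPD sets with MDE measurements $\mathsf{M}_1=\{e_i\}_i$ and $\mathsf{M}_2=\{f_j\}_j$, then $\mathsf{D}_1\otimes\mathsf{D}_2$ lies in the minimal tensor product, and hence in $\mathcal{S}_{1,2}$ by non-signalling. The effects $e_i\otimes f_j$ lie in $\mathcal{E}_{\mathcal{S}_{1,2}}$, since the composite state space is contained in the maximal tensor product. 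They also discriminate $\mathsf{D}_1\otimes\mathsf{D}_2$ perfectly, because $\langle e_i\otimes f_j, s_k\otimes t_l\rangle=\delta_{ik}\delta_{jl}$. Since this set has cardinality $d_1d_2$, it is an MPD set of the composite. Two facts must be checked here: that $s\otimes t$ is extremal in $\mathcal{S}_{1,2}$ when $s$ and $t$ are extremal, and that $e_i\otimes f_j$ is extremal in $\mathcal{E}_{\mathcal{S}_{1,2}}$. I expect the second to be the delicate point. Once it holds, $\mathsf{M}_1\otimes\mathsf{M}_2$ is an MDE of the composite.

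For extremality of product states I would use the marginal maps. Suppose $s\otimes t=\lambda\omega+(1-\lambda)\omega'$. Applying $\mathrm{id}_1\otimes u_2$ gives $s$ as a convex combination of the two marginals, so both marginals equal $s$, and similarly for $t$. I would then argue that a state whose marginals are both extremal must be the product. This requires pairing each effect on one side with the conditional marginal on the other, which lies in the subnormalised cone. Since $s$ is extremal, each such conditional marginal must be proportional to $s$, and informational completeness of MDE statistics then pins the state down as $s\otimes t$. For extremality of $e_i\otimes f_j$ I would use the analogous argument on effects: $e_i\otimes f_j$ is 1 on $s_i\otimes t_j$ and 0 on the other elements of the product MPD set. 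If this step resists a direct proof, I would note that Definition~\ref{Def::Superposition} only needs $\mathsf{M}$ to be an MDE of $\mathsf{D}$, and fall back on the sufficient condition of~\cite{sen_2022_timelike} cited just before the lemma.

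For (i), take $r\in\mathrm{Ext}[\mathcal{S}_1]$ that is a superposition of $\mathsf{D}_1'\subseteq\mathsf{D}_1$ via $\mathsf{M}_1'$, and pick any $t_1\in\mathsf{D}_2$. Then $r\otimes t_1$ is extremal and satisfies $(e_i\otimes f_j)(r\otimes t_1)=e_i(r)\delta_{j1}$. These values lie in $(0,1)$ exactly for $e_i\in\mathsf{M}_1'$, $j=1$, and they sum to 1 there. So $r\otimes t_1$ is a superposition of $\mathsf{D}_1'\otimes\{t_1\}$.

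For (ii) we must treat every extremal $\omega\in\mathcal{S}_{1,2}$, including entangled ones. If $\omega$ is a product $r\otimes t$, the argument of (i) applies, with $t$ measured by an MDE that has $t$ in its deterministic set, or by any MDE if $t$ is itself a superposition. If $\omega$ is not a product, its marginal on system 1 need not be extremal. I would proceed as follows. Choose an MDE $\mathsf{M}_1$ on which the marginal of $\omega$ is nondeterministic; one exists by informational completeness, since otherwise $\omega$ would be fixed as a product of deterministic states. Then pair it with any $\mathsf{M}_2$. The outcomes $e_i\otimes f_j$ with nonzero probability form the required $\mathsf{M}'$. They sum to 1 automatically, and at least two of them are nonzero, so none of them equals 1. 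This gives the superposition condition with respect to $\mathsf{M}_1\otimes\mathsf{M}_2$.

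For (iii), let $\mathsf{D}$ be any MPD set of the composite and $\omega\in\mathrm{Ext}\setminus\mathsf{D}$. With outcome sharpness, $\mathsf{D}$ is in bijection with its MDE. The goal is then just to show that $\omega$ is nondeterministic on that MDE: any extremal state on which an outcome has probability 1 equals the unique state certified by that outcome, and hence lies in $\mathsf{D}$. Taking $\mathsf{M}'$ to be the support of the outcome distribution then finishes the argument. I expect the main obstacle to be this last step together with the extremality of product effects; uniform superposition of $\mathcal{S}_1$ is used only to guarantee that product MDEs behave sharply.
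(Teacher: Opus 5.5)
Your argument is correct in substance. Parts (i) and (iii) are essentially the paper's. In (i) the paper likewise tensors a local superposition witness with an arbitrary extremal state of system 2 and tests it on $\mathsf{M}_1\otimes\mathsf{M}_2$. In (iii) the paper uses exactly your observation: outcome sharpness forces $\Det(\mathsf{M})=\mathsf{D}$, so every extremal state outside $\mathsf{D}$ is probabilistic on $\mathsf{M}$.

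That last step is not an obstacle; the sentence you wrote already closes it. Your remark that uniform superposition of $\mathcal{S}_1$ ``guarantees that product MDEs behave sharply'' misplaces its role. Sharpness of the composite is a hypothesis. In the paper, the assumption on $\mathcal{S}_1$ only serves, via (i), to ensure that superposition exists in the composite at all, so that uniformity is not vacuous.

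On extremality of $e_i\otimes f_j$: the paper does not prove it either. It reads the hypothesis ``operational dimension $d_1d_2$'' as saying that product MDEs are MDEs of the composite (see the paragraph before the lemma). Your first fallback does not help, because an MDE consists of extremal effects by definition. Your extremality argument for $s\otimes t$, by contrast, fills a step the paper only asserts.

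Part (ii) is where you genuinely differ. You split extremal composite states into two cases:
\begin{itemize}
\item for product states you use complete superposition of $\mathcal{S}_1$ as in (i);
\item for non-product states you use informational completeness plus the ``extremal marginal implies product'' lemma, which is effectively the paper's proof of Lemma~\ref{Thm::EntanglementAndSuperposition}.
\end{itemize}
The paper avoids the case split. For an extremal $z$ (it phrases this for $z\in\Det(\mathsf{M}_1\otimes\mathsf{M}_2)$, but the argument works for any $z$):
\begin{itemize}
\item decompose the marginal as $z_1=\sum_k p_k s_k$ into extremal states and pick $s_{k^*}$ with $p_{k^*}>0$;
\item by complete superposition, take an MDE $\mathsf{N}_1$ on which $s_{k^*}$ is probabilistic;
\item then $e'_l(z_1)\le p_{k^*}e'_l(s_{k^*})+(1-p_{k^*})<1$ for all $l$, so $z$ is probabilistic on $\mathsf{N}_1\otimes\mathsf{N}_2$.
\end{itemize}
This needs neither informational completeness nor the product characterisation. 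Your route costs those two extra ingredients, but it shows that the subsystem hypothesis is needed only for product states.

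If you keep your route, tighten the justification ``otherwise $\omega$ would be fixed as a product of deterministic states''. Suppose $\omega_1$ were deterministic on every MDE. Then any two states in a convex decomposition of $\omega_1$ would have identical deterministic MDE statistics, so by informational completeness $\omega_1$ is extremal. Your cone argument, which needs only one extremal marginal, then gives $\omega=\omega_1\otimes\omega_2$, a contradiction.
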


On the other hand, mutual superposition is an exception and does not always carry over to arbitrary compositions. This observation will lead to our main result formalised in Corollary~\ref{corollary:mainresult} to Theorem~\ref{Thm::QuantumTensorProduct}.

Note from Lemma~\ref{Thm::SuperpositionSubsystems} above that only one of the subsystems needs to admit superposition in order for the composition to admit the same. In fact, the subsystem that does not admit superposition can indeed be classical. This is an important distinction to~\cite{aubrun_entanglement_2022}. Therein, the authors first take superposition to be the non-simplicial structure of a state space. Their main result then says that if both the subsystems have non-simplicial state spaces, every non-signalling composition would then either have entangled states or entangled effects. Non-simpliciality of both state spaces is not necessary for non-simpliciality of the composite. For instance a gbit (non-simplicial) composed with a simplex will have a non-simplicial state space and therefore admit superposition in the sense of~\cite{aubrun_entanglement_2022}. However, there cannot be entanglement in either the state space or effect space since when one subsystem is a simplex, the minimal and maximal tensor products coincide. Lemma~\ref{Thm::SuperpositionSubsystems} only aks whether the composite will display superposition if one the subsystems do.

Converse to the question leading to Lemma~\ref{Thm::SuperpositionSubsystems}, one might also ask whether the presence of superposition in a composite system implies the presence of superposition in its subsystems. First, note that superposition exists in the bipartite state space of Boxworld, even though it does not in its subsystems.

\begin{example}
    The PR box can be represented as a bipartite distribution given by $\overrightarrow p_{\mathrm{PR}} \coloneqq (1/2)\delta_{a \oplus b = xy}$. Now take the collection of local deterministic distributions $\{\delta_{(a,0),(b,0)},\delta_{(a,x),(b,\overline b)},\delta_{(a,\overline x),(b,y)},\delta_{(a, \overline x),(b,1)}\}$~\footnote{\textit{Note:} All four of these states lie on the same Bell facet that is algebraically violated by the PR box. One can also construct an example where this is not the case.}. This collection can be perfectly discriminated by a set of extremal effects $\{e_i\}_{i=1}^{4}$, such that for any bipartite distribution, $\overrightarrow {p}$, $e_1(\overrightarrow {p})=p(00|00)$, $e_2(\overrightarrow {p})=p(01|00)$, $e_3(\overrightarrow {p})=p(10|01)$, and $e_4(\overrightarrow {p})=p(11|01)$. This set of effects forms a measurement because for any non-signalling distribution, the sum of the above probabilities is 1. In addition, the measurement is also an MDE since the maximum number of extremal states that can be perfectly discriminated is 4, and therefore the collection of states above is an MPD set. Now, notice that $e_1(\overrightarrow p_{\mathrm{PR}}) = e_3(\overrightarrow p_{\mathrm{PR}}) = 1/2$. Therefore, the PR box is a superposition of the the subset $\{\delta_{(a,0),(b,0)},\delta_{(a,x),(b,\overline b)}\}$ of the MPD set above. This is analogous to how the maximally entagled states $\ket{\phi_+}\rangle$, where $\ket{\phi_+} \coloneqq (\ket{00}+\ket{11})/\sqrt{2}$ is a superposition of the subset $\{\ket{00}\rangle,\ket{11}\rangle\}$ of the MPD set $\{\ket{00}\rangle,\ket{01}\rangle,\{\ket{10}\rangle,\ket{11}\rangle\}$.  
   \label{Example::prbox}
\end{example}

This example suggests that superposition can emerge from the nature of composition. In particular, whether entanglement implies the presence of superposition in the composite system, regardless of its presence or absence in the subsystems. For any theory with the property that the operational dimension of the composite system is equal to the product of the operational dimensions of its subsystems, this is indeed the case.

\begin{restatable}{lemma}{EntaglementImpliesSuperposition}
    Let $(\mathcal{S}_1, \mathcal{E}_{\mathcal{S}_1})$ and $(\mathcal{S}_2,\mathcal{E}_{\mathcal{S}_2})$ be two GPT systems with operational dimensions $d_1$ and $d_2$ respectively. Let $(\mathcal{S}_{1,2} , \mathcal{E}_{\mathcal{S}_{1,2}}) $ be their composition such that $\mathcal{S}_{1,2}  \supset \mathcal{S}_{1} \ntens \mathcal{S}_{2}$ and its operational dimension is $d_1d_2$. Then $(\mathcal{S}_{1,2}, \mathcal{E}_{1,2})$ exhibits superposition.   \label{Thm::EntanglementAndSuperposition}
\end{restatable}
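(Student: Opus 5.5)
The plan is to construct the superposing state as an extremal entangled state and the MPD set as a product set, so that Definition~\ref{Def::Superposition} applies directly. Since $\mathcal{S}_{1,2} \supsetneq \mathcal{S}_1 \ntens \mathcal{S}_2$ and both sets are compact and convex, some extremal state $r \in \mathrm{Ext}[\mathcal{S}_{1,2}]$ lies outside $\mathcal{S}_1 \ntens \mathcal{S}_2$. Otherwise, by Krein--Milman, $\mathcal{S}_{1,2}$ would be the convex hull of points in $\mathcal{S}_1 \ntens \mathcal{S}_2$ and would equal it. So $r$ is an extremal entangled state, and in particular it is not a product $s_1 \otimes s_2$.

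Next, I fix local MPD sets $\mathsf{D}_1=\{s_i\}_{i=1}^{d_1}$ and $\mathsf{D}_2=\{t_j\}_{j=1}^{d_2}$ together with MDE measurements $\mathsf{M}_1=\{e_i\}$ and $\mathsf{M}_2=\{f_j\}$. The products $s_i\otimes t_j$ are extremal in $\mathcal{S}_{1,2}$, which is the standard fact that products of extremal states remain extremal in any non-signalling composition, because their marginals are pure. They are also perfectly discriminated by $\{e_i\otimes f_j\}$, which is a measurement on $\mathcal{S}_{1,2}$ since the composition is non-signalling and these product effects belong to the composite effect space. There are $d_1d_2$ of them, equal to the operational dimension, so $\mathsf{D}_1\otimes\mathsf{D}_2$ is an MPD set. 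I also need the product effects $e_i\otimes f_j$ to be extremal in $\mathcal{E}_{\mathcal{S}_{1,2}}$, so that $\mathsf{M}_1\otimes\mathsf{M}_2$ is an MDE. I would argue this as follows: an effect that attains value $1$ on an extremal state and $0$ on the others can be split as $e\otimes f = \lambda a+(1-\lambda)b$, and restricting to product states then forces $a$ and $b$ to agree with $e\otimes f$ on the spanning set of product states.

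The final step is to show $r$ is probabilistic for some choice of local MDEs. The distribution $p_{ij}=(e_i\otimes f_j)(r)$ always sums to $1$. If $r$ were a superposition of no subset for every choice of local MDEs, then for every choice some $p_{ij}$ would equal $1$, meaning $r$ is deterministic with respect to every product MDE. I would then use the standing assumption that MDEs are informationally complete. Being deterministic, with outcome $(i,j)$ say, gives $(e_i\otimes \mathrm{id})(r)$ equal to the full unnormalised marginal, so the marginal statistics of $r$ under every local MDE are deterministic. The aim is to conclude that the statistics of $r$ on all products of MDE effects factorise as $q_1(e)q_2(f)$. Tomographic reconstruction from MDE statistics, applied to product effects which span the dual of the composite since local MDEs are informationally complete, would then give $r=\rho_1\otimes\rho_2\in\mathcal{S}_1\ntens\mathcal{S}_2$. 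This contradicts entanglement. Hence for some product MDE at least two outcomes have probabilities in $(0,1)$, and taking $\mathsf{M}'$ to be the support yields the superposition.

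I expect this last step to be the main obstacle. The difficulty is passing from ``deterministic for every product MDE'' to full factorisation: different MDEs pick out different deterministic outcomes, and a joint probability with a deterministic outcome factorises only along that outcome. I would first try to show that a deterministic $1$ on $e_i\otimes f_j$ forces $(e\otimes f)(r)=e(\rho_1)f(\rho_2)$ for all MDE effects, using positivity and non-signalling marginals. If this fails, I would fall back on Lemma~\ref{Thm::SuperpositionSubsystems}-style arguments together with the extremality of $r$.
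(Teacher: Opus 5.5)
Your overall route is sound, and the Krein--Milman step producing an extremal $r\notin\mathcal{S}_1\ntens\mathcal{S}_2$ fills in something the paper simply assumes. However, the decisive step is left unproven: you only say you would \emph{try} to obtain factorisation. The obstacle you anticipate also rests on a false premise, because a deterministic joint distribution factorises on \emph{every} outcome pair, not just along the deterministic one. Concretely, if $(e_{i^*}\otimes f_{j^*})(r)=1$ then $(e_i\otimes f_j)(r)=\delta_{ii^*}\delta_{jj^*}$. Summing over $j$ (resp.\ $i$) gives $e_i(r_1)=\delta_{ii^*}$ and $f_j(r_2)=\delta_{jj^*}$, where $r_1=(\mathrm{id}_1\otimes u_2)(r)$ and $r_2=(u_1\otimes\mathrm{id}_2)(r)$ do not depend on which MDEs you chose. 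Every pair $(e,f)$ of MDE effects lies in some product MDE, on which $r$ is by assumption deterministic, so $(e\otimes f)(r)=e(r_1)f(r_2)$ for all MDE effects $e,f$. Informational completeness makes the local MDE effects separate states. Since each $u_k$ is a sum of them, they span $\mathbb{V}_k^*$ (assuming, as usual, that $\mathcal{S}_k$ spans $\mathbb{V}_k$). Hence their products span $(\mathbb{V}_1\otimes\mathbb{V}_2)^*$, and $r=r_1\otimes r_2\in\mathcal{S}_1\ntens\mathcal{S}_2$, which is the contradiction you wanted. No positivity argument and no fallback to Lemma~\ref{Thm::SuperpositionSubsystems} is needed.

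Once this step is filled in, your proof genuinely differs from the paper's. The paper works only with the marginal. Determinism on all product MDEs makes $z_1$ deterministic on every local MDE, so in a decomposition $z_1=\sum_k p_k s_k$ into extremal states every $s_k$ has the same $0/1$ statistics on all MDEs. By informational completeness the $s_k$ then coincide, so $z_1$ is pure, and the paper tacitly uses that a state with a pure marginal is a product. Your version avoids both the extremal decomposition and that unstated lemma by using the joint statistics directly; both arguments ultimately rely on the composite living in $\mathbb{V}_1\otimes\mathbb{V}_2$.

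Separately, your sketch that $e_i\otimes f_j$ is extremal in $\mathcal{E}_{\mathcal{S}_{1,2}}$ does not work as written. A splitting $\lambda a+(1-\lambda)b$ is pinned down only on product states where $e_i\otimes f_j$ takes the value $0$ or $1$, and these need not span. The paper does not prove this either: it reads the hypothesis of operational dimension $d_1d_2$ as meaning that product MDEs are MDEs of the composite, and you can do the same.
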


Lemma~\ref{Thm::EntanglementAndSuperposition} establishes that, for the considered class of GPTs, the existence of entanglement---ensured by choosing a tensor product strictly larger than the minimal---implies the presence of superposition. In Boxworld, superposition is determined exclusively by entanglement, as implied by GLT not exhibiting superposition. The converse, i.e.\ whether superposition implies entanglement, is not true. An example is the minimal tensor product of quantum systems where the product state $\ket{++}\rangle$ is in superposition of the states $ \{ \ket{00}\rangle, \ket{01}\rangle, \ket{10}\rangle, \ket{11}\rangle \}$.

\section{Maximality of the Quantum Tensor Product}\label{Sec::MaximalityofQuantumTensorProduct}

The study of GPTs has been principally motivated by the goal of identifying reasonable principles to re-axiomatise quantum theory~\cite{Hardy2001FiveReasonableAxioms}. Information theoretic principles have been proposed since that try to single out the set of quantum correlations~\cite{PhysRevLett.96.250401,PhysRevLett.99.180502,Pawlowski2009,Navascues2009,Fritz2013,M_ller_2021}. Motivated by these, we investigate whether any of the superposition principles stated in Definition~\ref{Def::SuperpositionPrinciples} deem useful for this purpose. Here, we do not explore the connection between superposition and the set of correlations realisable in a causal structure~\footnote{A causal structure is a collection of variables arranged as nodes of a directed acyclic graph where some of the nodes are labelled as observed. It represents the causal relations among the variables.}. Rather, we investigate, whether a superposition principle can replace one of the textbook axioms of quantum theory.

Recall that mutual superposition is missing from Lemma~\ref{Thm::EntanglementAndSuperposition}. This came from the observation that the maximal tensor product of quantum systems does not admit mutual superposition, even though the subsystems do. Since the superposition principles were inspired from the structure of quantum state and effect spaces, the quantum tensor product admits mutual superposition. It is then natural to ask whether there are compositions of quantum systems, strictly larger than the quantum tensor product, that admit mutual superposition. We found that if one also assumes no-restriction hypothesis, no strict extension will admit mutual superposition, as formalised below.

\begin{restatable}{theorem}{QuantumTensorProduct}
    Let $(\mathcal{S_Q}_1, \mathcal{E}_{\mathcal{S_Q}_1})$ and $(\mathcal{S_Q}_2, \mathcal{E}_{\mathcal{S_Q}_2})$ be the state and effect spaces describing two quantum systems respectively. Let $(\mathcal{S_Q}_{1,2}, \mathcal{E}_{\mathcal{S_Q}_{,12}})$ be the quantum state and effect spaces of the composite system. No composition $(\mathcal{S}_{\mathcal{Q}_1} \boxtimes \mathcal{S}_{\mathcal{Q}_2},\mathcal{E}_{\mathcal{S}_{\mathcal{Q}_1} \boxtimes \mathcal{S}_{\mathcal{Q}_2}})$ where $\mathcal{S_Q}_{1,2}  \subset   \mathcal{S}_{\mathcal{Q}_1} \boxtimes \mathcal{S}_{\mathcal{Q}_2} \subseteq \mathcal{S}_{\mathcal{Q}_1} \otimes_{\text{max}} \mathcal{S}_{\mathcal{Q}_2}$ admits mutual superposition.

    \label{Thm::QuantumTensorProduct}
\end{restatable}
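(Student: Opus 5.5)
Our plan is to exhibit, for any composition $\mathcal{S}_{1,2}\coloneqq\mathcal{S}_{\mathcal{Q}_1}\boxtimes\mathcal{S}_{\mathcal{Q}_2}$ strictly larger than $\mathcal{S_Q}_{1,2}$, an explicit triple: an extremal state $s$, an MDE $\mathsf{M}$ and an MPD set $\mathsf{D}$, such that $s$ is a superposition of some $\mathsf{D}'\subseteq\mathsf{D}$, but some $r\in\mathsf{D}'$ is not a superposition of any collection containing $s$. The natural reason this can happen is that $s$ lies outside the quantum state space. By the no-restriction hypothesis, the effect space $\mathcal{E}_{\mathcal{S}_{1,2}}$ is then strictly smaller than the quantum effect space. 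In particular, it lacks the quantum effect $\ketbra{\psi}{\psi}$ that would be needed to single out a quantum pure state against $s$. Hence $s$ cannot be an element of any MPD set, since membership would require a perfectly discriminating effect $e$ with $e(s)=1$. That kills mutuality.

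\textbf{Step 1: fix the operational dimension.} We first show that the operational dimension of $\mathcal{S}_{1,2}$ is still $d_1d_2$, and that product MDEs $\{\ketbra{i}{i}\otimes\ketbra{j}{j}\}$ remain MDEs. For this we invoke Lemma~\ref{lemma::multiplicityquantum} (the appendix statement for compositions between the quantum tensor product and $\otimes_{\max}$), or the result of~\cite{sen_2022_timelike} covering the minimal and maximal products. The effects $\ketbra{i}{i}\otimes\ketbra{j}{j}$ are product effects, so they lie in $\mathcal{E}_{\mathcal{S}_{1,2}}$ for any composition below $\otimes_{\max}$. They remain extremal, because even $\mathcal{E}_{\min}$ contains them as extremal points. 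They also discriminate the product basis perfectly.

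\textbf{Step 2: construct $s$.} Take an extremal point $s$ of $\mathcal{S}_{1,2}$ not in $\mathcal{S_Q}_{1,2}$; one exists because $\mathcal{S}_{1,2}$ is compact convex and strictly contains $\mathcal{S_Q}_{1,2}$. Since $s$ is not a product state, it cannot be deterministic on every product basis. Indeed, suppose it were deterministic on all of them. Its marginals would then be pure, so $s=\rho_1\otimes\rho_2$, contradicting $s\notin\mathcal{S_Q}_{1,2}$. We therefore choose local bases such that some outcomes $\ketbra{i}{i}\otimes\ketbra{j}{j}$ receive probabilities in $(0,1)$. This makes $s$ a superposition of the corresponding $\mathsf{D}'\subseteq\{\ket{ij}\rangle\}$.

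\textbf{Step 3: show mutuality fails.} For $r=\ket{ij}\rangle\in\mathsf{D}'$, mutuality requires an MPD set $\mathsf{D}''\ni s$ with an MDE $\mathsf{M}''$ containing an extremal effect $e$ with $e(s)=1$ and $e(t)=0$ for the remaining $d_1d_2-1$ elements of $\mathsf{D}''$. We argue this is impossible. Every effect $e\in\mathcal{E}_{\mathcal{S}_{1,2}}$ is a quantum effect, because $\mathcal{S}_{1,2}\supset\mathcal{S_Q}_{1,2}$ gives $\mathcal{E}_{\mathcal{S}_{1,2}}\subseteq\mathcal{E}_{\mathcal{S_Q}_{1,2}}$. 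A measurement of $d_1d_2$ extremal effects that perfectly discriminates $d_1d_2$ states summing to $u=\mathds{1}$ should then be forced to be rank-one projective, i.e.\ $e=\ketbra{\phi}{\phi}$. The equation $\tr[e\,s]=1$ would then force $s$ to act on the effect space like a quantum pure state. Since, by extremality, $s$ is determined by its values on the informationally complete set of MDEs, we would obtain $s=\ketbra{\phi}{\phi}\in\mathcal{S_Q}_{1,2}$, a contradiction.

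The main obstacle is the rank-one claim in Step 3. The worry is that $\mathsf{M}''$ could contain higher-rank effects that are extremal in the restricted effect space $\mathcal{E}_{\mathcal{S}_{1,2}}$ but not in the quantum one. We would first try trace counting: the $d_1d_2$ effects sum to $\mathds{1}$, and each must have value $1$ on a state while annihilating the others. If that fails, we would fall back on choosing $s$ so that it violates positivity against a specific rank-one projector, and then show directly that any $e$ with $e(s)=1$ must dominate that projector, which is incompatible with $e$ being an element of an MDE.
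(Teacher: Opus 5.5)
Your skeleton is the right one: product MDEs remain MDEs, and you exhibit an extremal state that is a superposition of product states but lies in no MPD set. However, Step~3 has a genuine gap, and it is larger than the rank-one issue you flag.

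\textbf{The gap in Step~3.} Your witness $s$ is a \emph{non-quantum} extremal state, so you must show that no MDE effect $e$ satisfies $\tr[e\,s]=1$. Two ingredients are missing.
\begin{enumerate}[label=(\alph*)]
\item Trace counting yields $\tr[e]=1$ only if you already know $\tr[e\,\tau]\leq\tr[e]$ for every state $\tau$ of the composition, including non-positive ones. For a merely POPT $\tau$ this is not automatic. It needs the separability of $\mathbb{I}-\ketbra{\phi}{\phi}$ (Lemmas~\ref{Lemma::separabilitu_comp_proj} and~\ref{lemma::tracebound}).
\item More seriously, suppose you grant $e=\ketbra{\phi}{\phi}$. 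The step from $\tr[\ketbra{\phi}{\phi}\,s]=1$ to $s=\ketbra{\phi}{\phi}$ ``by informational completeness'' is a non sequitur. You know one number, not the statistics of $s$ on the other MDEs.
\end{enumerate}
What you actually need for (b) is that a pure state is the \emph{only} element of $\mathcal{S}_{\mathcal{Q}_1}\otimes_{\text{max}}\mathcal{S}_{\mathcal{Q}_2}$ giving probability one to its own projector. This is true. Take Hermitian $X$ with $\tr[\ketbra{\phi}{\phi}X]>0$. For small $\epsilon>0$ the operator $\mathbb{I}-\ketbra{\phi}{\phi}+\epsilon X$ stays in the Gurvits--Barnum ball, hence is separable. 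Given $\tr[\tau(\mathbb{I}-\ketbra{\phi}{\phi})]=0$, this forces $\tr[\tau X]\geq 0$. It follows that $\tr[\tau X]=0$ whenever $\tr[\ketbra{\phi}{\phi}X]=0$, i.e.\ $\tau=\ketbra{\phi}{\phi}$.

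Together with the trace-one lemma, this closes your route without any rank-one claim, since each spectral piece of $e$ has value at most one on $s$. But neither ingredient is in your plan.

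\textbf{How the paper avoids this.} The paper sidesteps the uniqueness problem by swapping roles. Your opening paragraph actually states this idea and then misapplies it: the projector missing from $\mathcal{E}_{\mathcal{S}_{1,2}}$ prevents a \emph{quantum pure state} from being singled out, not $s$.
\begin{itemize}
\item Take a non-quantum $\tilde{\rho}$ in the composition and a normalised eigenvector $\psi$ with $\tr[\tilde{\rho}\,\ketbra{\psi}{\psi}]<0$.
\item Since $\tilde{\rho}$ is POPT, $\psi$ is entangled. Hence it is a superposition with respect to the product MDE built from its Schmidt bases, and it is still extremal in the composition (Lemma~\ref{lemma::rank1extremal}).
\item Any MDE effect $f$ with $\tr[f\ketbra{\psi}{\psi}]=1$ has trace one (Lemma~\ref{lemma::unit_traceof_effects}).
\item For a \emph{pure} $\psi$, Cauchy--Schwarz and orthogonality of the eigenvectors of $f$ then force $f=\ketbra{\psi}{\psi}$. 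This is not an allowed effect, because $\tr[\tilde{\rho}\,\ketbra{\psi}{\psi}]<0$.
\end{itemize}
Hence $\psi$ lies in no MPD set, and mutuality fails.

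\textbf{Two smaller slips.}
\begin{itemize}
\item Extremality of $\ketbra{i}{i}\otimes\ketbra{j}{j}$ in $\mathcal{E}_{\mathcal{S}_{1,2}}$ follows from extremality in the \emph{larger} quantum effect space, not from the smaller $\mathcal{E}_{\min}$.
\item In Step~2, no state whatsoever is deterministic on every product basis, so your contradiction argument there is vacuous. Its conclusion is nevertheless correct.
\end{itemize}
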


This result provides an operational characterisation of the quantum composition rule, offering an alternative to the requirement of textbook quantum tensor product. On the other hand, recall from Lemma~\ref{Thm::SuperpositionSubsystems} above, the maximal tensor product of quantum systems satisfies both complete and uniform superposition. Together with~\ref{Thm::QuantumTensorProduct}, we then get the following consequence.

\begin{corollary}
    No strict extension of the quantum tensor product simultaneously admits all three superposition principles and the no-restriction hypothesis.
    \label{corollary:mainresult}
\end{corollary}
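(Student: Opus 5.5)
The plan is to exhibit, in any strict extension $\mathcal{S}_{\mathcal{Q}_1}\boxtimes\mathcal{S}_{\mathcal{Q}_2}\supsetneq\mathcal{S_Q}_{1,2}$ with effects fixed by the no-restriction hypothesis, one extremal state $s$ and an MPD set $\mathsf{D}'$ such that $s$ is a superposition of $\mathsf{D}'$ but some $r\in\mathsf{D}'$ is not a superposition of any collection containing $s$. The mechanism is duality. Enlarging the state cone beyond the positive semidefinite cone strictly shrinks the dual cone. The effect space $\mathcal{E}_{\mathcal{S}_{\mathcal{Q}_1}\boxtimes\mathcal{S}_{\mathcal{Q}_2}}$ is therefore a proper subset of the quantum effects, and some rank-one projectors onto entangled vectors stop being effects. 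Mutual superposition forces $s$ to be perfectly discriminated inside an MPD set, and that is exactly what such a missing projector should prevent.

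\emph{Step 1 (operational dimension and MDEs).} Every composition sits between $\ntens$ and $\xtens$, so product effects $e_i\otimes f_j$ remain valid effects. Hence product orthonormal bases $\{\ket{ij}\rangle\}$ stay perfectly discriminable, and the operational dimension is at least $d_1d_2$. For the reverse inequality I would appeal to Lemma~\ref{lemma::multiplicityquantum} in the Appendix, or argue directly: an MDE consists of $d_1d_2$ effects, each positive semidefinite because the new effects form a subset of the quantum effects. They sum to $\mathds{1}$ and each attains the value $1$ on some state. A trace count then bounds the number of outcomes by $d_1d_2$.

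\emph{Step 2 (choice of the bad state).} Pick $w\in\mathcal{S}_{\mathcal{Q}_1}\boxtimes\mathcal{S}_{\mathcal{Q}_2}\setminus\mathcal{S_Q}_{1,2}$. Then $w$ is a unit-trace Hermitian operator, positive on product states, with a unit vector $\ket{\phi}$ satisfying $\langle\phi|w|\phi\rangle<0$; necessarily $\ket{\phi}$ is entangled. It follows that $\ketbra{\phi}{\phi}$ is not an effect of the extended theory. Complete $\ket{\phi}$, say of Schmidt form $\sum_k\lambda_k\ket{kk}$, and take the product MDE $\{\ket{ij}\rangle\}$ built from its Schmidt bases. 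Since $\ket{\phi}\rangle$ is still a valid extremal state, it gives probabilistic outcomes $\lambda_k^2$ on the subset $\mathsf{D}'=\{\ket{kk}\rangle\}$. So $s:=\ket{\phi}\rangle$ is a superposition of $\mathsf{D}'$, exactly as in Example~\ref{Example::prbox}.

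\emph{Step 3 (failure of mutuality; main obstacle).} Mutual superposition would require each $\ket{kk}\rangle$ to be a superposition of some $\mathsf{D}''\ni s$. By Definition~\ref{Def::Superposition} this requires an MDE containing an extremal effect $e$ with $e(s)=1$. Because $e\le\mathds{1}$ is positive semidefinite, $e(\ket{\phi}\rangle)=1$ forces $e=\ketbra{\phi}{\phi}+e'$ with $e'\ge0$ supported on $\ket{\phi}^\perp$.

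The hard part is showing that no such $e'$ can restore positivity on the whole enlarged state space, i.e.\ make $\langle\phi|w'|\phi\rangle+\tr[e'w']\ge0$ for every $w'$ in the new state space. A single fixed $w$ does not suffice for this. My first attempt is to choose $w$ and $\ket{\phi}$ jointly: take $w$ extremal, and take $\ket{\phi}$ as the eigenvector of its most negative eigenvalue. The goal is an inequality $\tr[e'w]\le\tr[e'\,w_+]$ that is too small to compensate, using that $w_+$ can be bounded on $\ket{\phi}^\perp$ via the product-state positivity of $w$. If this fails, I would instead mix $w$ with quantum states supported on $\ket{\phi}^\perp$, which lie in the new state space by convexity. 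Since $e'\le\mathds{1}-\ketbra{\phi}{\phi}$, these mixtures constrain $e'$ and can be used to squeeze it to zero on the relevant subspace, leaving $e=\ketbra{\phi}{\phi}$, which Step 2 shows is not an effect.

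Either route yields a contradiction with mutual superposition, and the theorem follows. Corollary~\ref{corollary:mainresult} then combines this with Lemma~\ref{Thm::SuperpositionSubsystems}.
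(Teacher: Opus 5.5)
Your Steps 1 and 2 follow the paper's proof of Theorem~\ref{Thm::QuantumTensorProduct}, which is all the corollary needs. Step 3, however, has a genuine gap, and the target you set there is false. Using only positivity of $e$ on the enlarged state space, you cannot force $e'=0$. Take any product vector $\ket{\chi}=\ket{a}\otimes\ket{b}$ orthogonal to $\ket{\phi}$: fix $\ket{a}$ and choose $\ket{b}\perp(\langle a|\otimes\mathds{1})\ket{\phi}$. Then $e=\mathds{1}-\ketbra{\chi}{\chi}=(\mathds{1}-\ketbra{a}{a})\otimes\mathds{1}+\ketbra{a}{a}\otimes(\mathds{1}-\ketbra{b}{b})$ and $\mathds{1}-e=\ketbra{\chi}{\chi}$ are both separable. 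So $e$ is an effect of every composition contained in $\mathcal{S}_{\mathcal{Q}_1}\xtens\mathcal{S}_{\mathcal{Q}_2}$. It is a projector, hence extremal, and $e(\ket{\phi}\rangle)=1$. Here $e'=\mathds{1}-\ketbra{\phi}{\phi}-\ketbra{\chi}{\chi}\neq 0$. Neither of your two routes (an extremal $w$ with its most negative eigenvector, or mixing $w$ with states supported on $\ket{\phi}^\perp$) can rule this $e$ out. What excludes it is maximality: $\{e,\mathds{1}-e\}$ perfectly discriminates $\ket{\phi}\rangle$ from $\ket{\chi}\rangle$, but it is not part of an MDE with $d_1d_2$ outcomes. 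Your Step 3 mentions the MDE but then never uses it.

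The missing ingredient is the paper's Lemma~\ref{lemma::unit_traceof_effects}: every effect in an MDE of the extension has trace exactly $1$.
\begin{itemize}
\item $\mathds{1}-\ketbra{\chi}{\chi}$ is separable for every unit vector $\ket{\chi}$, entangled or not (Lemma~\ref{Lemma::separabilitu_comp_proj}). Hence every state $w'$ of the extension satisfies $\tr[e\,w']\leq\tr[e]$ for positive semidefinite $e$ (Lemma~\ref{lemma::tracebound}).
\item So $e_i(s_i)=1$ gives $\tr[e_i]\geq 1$ even when $s_i$ is not positive semidefinite. Your Step 1 trace count tacitly needs this too.
\item With exactly $d_1d_2$ such effects summing to $\mathds{1}$, each has trace $1$.
\end{itemize}
In Step 3 this gives $\tr[e']=0$, hence $e'=0$ and $e=\ketbra{\phi}{\phi}$, which Step 2 already shows is not an effect. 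So $\ket{\phi}\rangle$ lies in no MPD set, no $\mathsf{D}''\ni\ket{\phi}\rangle$ exists, and mutual superposition fails. With this fixed, the corollary follows from the theorem alone; invoking Lemma~\ref{Thm::SuperpositionSubsystems} is not needed for the negative statement.
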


Although the statements in Theorem~\ref{Thm::QuantumTensorProduct} and Corollary~\ref{corollary:mainresult} deal with bipartite systems, one can consider two systems at a time to extended this result to multipartite systems.

\section{Preparational Uncertainty} \label{Sec::RelationToUncertainty}

Preparational uncertainty and superposition are usually studied as separate non-classical features of quantum theory. Since an operational notion of superposition is available, we investigate how uncertainty connects to superposition. We outline a brief introduction to preparational uncertainty and show that preparational uncertainty is a stricter non-classical requirement than superposition.  

Consider a scenario, where Alice prepares a state $s$ and sends it to Bob. Bob is allowed to randomly perform one of two measurements $\mathsf{M}$ or $\mathsf{N}$ determined by a random variable $X$ taking values $x \in \{0,1\}$ depending on the choice of measurements. His observation, labelled by a random variable $A$ taking values $a \in \{0,\cdots,d-1\}$, is reported back to Alice. Alice needs to guess Bob's outcome. Quantum theory allows for MDE measurements $\mathsf{M}$ and $\mathsf{N}$, such that for no choice of prepared state, Alice can perfectly guess, on average. In particular, the distribution of Alice's guess admits:
\begin{equation}
    \begin{split}
        &\mathrm{P}(A) = p(x=0)\mathrm{P}(A|x=0) + p(x=1)\mathrm{P}(A|x=1) < 1 \\
        & \implies \mathrm{H}_{\text{min}}(A) \coloneqq -\log_2 \left(\mathrm{P}(A)\right) > 0 \, , \\
    \end{split}
\end{equation}
where $\mathrm{H}_{\text{min}}(A)$ is the min-entropy of Alice's guessing probability and the implication is known as an \textit{entropic uncertainty relation}~\cite{maassen_generalized_1988}. This phenomenon is known as \textit{preparational uncertainty}\footnote{Preparational uncertainty \cite{heisenberg_uber_1927,robertson_uncertainty_1929, maassen_generalized_1988, fiorentino2022uncertainty} is not to be confused with measurement uncertainty \cite{busch2014measurement, busch2015direct}, which pertains to the impossibility to measure two observables jointly. Both forms of uncertainty have been discussed extensively in the literature in the context of operational theories \cite{busch_comparing_2013, takakura_preparation_2020, saha_operational_2020, designolle_incompatibility_2019, aravinda_origin_2017}. Here, we focus on preparational rather than measurement uncertainty, as the former admits a natural formulation in terms of properties of states (see Definition~\ref{Def::PrepUncertainty}), making it particularly suited to drawing a parallel with our notion of superposition.}. The essence of this relies on the fact that there are no states that are deterministic with respect to both $\mathsf{M}$ and $\mathsf{N}$. This can be phrased theory-independently as follows:

\begin{definition}[Preparational Uncertainty]
    Two MDEs $\mathsf{M}$ and $\mathsf{N}$ exhibit preparational uncertainty iff $\Det(\mathsf{M}) \cap \Det(\mathsf{N}) = \emptyset$.
    \label{Def::PrepUncertainty}
\end{definition} 
Definition~\ref{Def::PrepUncertainty} introduces preparational uncertainty as a relational property for pairs of MDEs. A GPT system will then exhibit preparational uncertainty if it contains at least a pair of MDE with such property. 

Similarly to our treatment of superposition, we can introduce principles that generalise different structural features found in quantum theory that are related to preparational uncertainty. \textit{Weak uncertainty} denotes the property that, for every non-degenerate PVM of an arbitrary quantum system, one can find another non-degenerate PVM with no shared eigenstates. \textit{Strong uncertainty} abstracts the exclusive property of qubit quantum theory where any pair of non-degenerate PVMs shares no common eigenstates.

\begin{definition}[Existence and Principles of Preparational Uncertainty]
    Let $(\mathcal{S},\mathcal{E})$ be a state and effect pair of a GPT. Preparational uncertainty \textbf{exists} iff some pair of MDEs $\mathsf{M},\mathsf{N} \subseteq \mathrm{Ext}[\mathcal{E}]$  exhibits preparational uncertainty, following Definition~\ref{Def::PrepUncertainty}. Further, this uncertainty is said to be
     \begin{itemize}
        \item \textbf{weak} iff for every MDE $\mathsf{M}$ there exists an MDE $\mathsf{N}$ such that $\mathsf{M}$ and $\mathsf{N}$ exhibit preparational uncertainty;
        \item \textbf{strong} iff every pair of distinct MDEs $\mathsf{M},\mathsf{N} \subseteq \mathrm{Ext}[\mathcal{E}]$ exhibits preparational uncertainty.
    \end{itemize}
    \label{Def::ExistencePrepUncertainty}
\end{definition}

Table \ref{Tab::UncertaintyPrinciples} summarises these properties for the GPTs considered; detailed derivations are provided in Appendix \ref{Appendix: superposition GPT examples}.

% \begin{table}[ht]
%     \centering
%     \centering
%     \renewcommand{\arraystretch}{1.3} 
%     \begin{tabularx}{\columnwidth}{@{} l c YY @{}}
%         \toprule
%         \multirowthead{2}{\textbf{GPT}} & 
%         \multirowthead{2}{\textbf{Existence of} \\ \textbf{Uncertainty}} & 
%         \multicolumn{2}{c}{\textbf{Uncertainty Principles}} \\
%         \cmidrule(lr){3-4}
%         & & weak & strong \\
%         \midrule
%         Quantum ($d =2$) & \checkmark & \checkmark & \checkmark \\
%         Quantum ($d >2$) & \checkmark & \checkmark & $\times$ \\
%         Classical & $\times$ & \textbf{--} & \textbf{--} \\
%         GLT   & $\times$ & \textbf{--} & \textbf{--} \\ 
%         Boxworld & $\times$ & \textbf{--} & \textbf{--} \\ \thinhline
%         Spekkens  & \checkmark & \checkmark & \checkmark \\
%         $n$-gon ($n \leq 6$) & $\times$ & \textbf{--} & \textbf{--} \\
%         $n$-gon ($n>6$)      & \checkmark & \checkmark & $\times$ \\
%         GPT-1  & $\times$ & \textbf{--} & \textbf{--} \\
%         GPT-2 & $\times$ & \textbf{--} & \textbf{--} \\
%         \bottomrule
%     \end{tabularx}
%     \caption{Classification of preparational uncertainty in various GPTs. Theories above the horizontal line include both single and composite systems, whereas for those below the line we only examine individual systems. GPT-1 and GPT-2 are defined in Appendix \ref{Appendix::GPT1_2}. The symbol $\textbf{--}$ indicates that the principles are not applicable.}
%     \label{Tab::UncertaintyPrinciples}
% \end{table}

%NEW TABLE
\begin{table}[ht]
    \centering
    \includegraphics[width=1.0\linewidth]{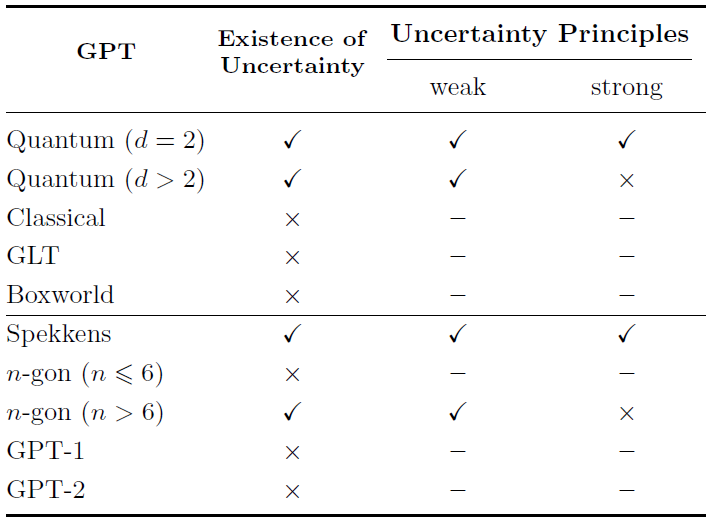}
    \caption{Classification of preparational uncertainty in various GPTs. Theories above the horizontal line include both single and composite systems, whereas for those below the line we only examine individual systems. GPT-1 and GPT-2 are defined in Appendix \ref{Appendix::GPT1_2}. The symbol $\textbf{--}$ indicates that the principles are not applicable.}
    \label{Tab::UncertaintyPrinciples}
\end{table}

A comparison with Table \ref{Tab::SuperpositionPrinciples} illustrates that superposition and preparational uncertainty are not operationally equivalent. For instance, we mentioned that strong uncertainty only holds for individual qubits ($d=2$), where each projector belongs to a unique PVM. Furthermore, in contrast to the superposition principles, the uncertainty principles defined here form a hierarchy: strong uncertainty necessarily implies weak uncertainty.

Preparational uncertainty is, in fact, a strictly stronger notion than superposition. Lemma~\ref{Thm::UncertaintyImpliesSuperposition} below formalises this by establishing relationships of implication between some of the principles in Definition~\ref{Def::SuperpositionPrinciples} and the existence of preparational uncertainty in Definition~\ref{Def::ExistencePrepUncertainty}. The last part of the proof relies on the additional operational assumption of outcome sharpness of MDEs.

\begin{restatable}{lemma}{UncImpliesSup}
    Let $ (\mathcal{S},\mathcal{E})$ be a state and effect space pair of a GPT such that preparational uncertainty exists. Then,
\begin{enumerate} [label=(\roman*), font=\normalfont]
    \item $(\mathcal{S},\mathcal{E})$ admits complete superposition   
    
    \item if $ (\mathcal{S},\mathcal{E})$ admits outcome sharpness of MDEs, it also admits uniform superposition.
\end{enumerate}
    \label{Thm::UncertaintyImpliesSuperposition}
\end{restatable}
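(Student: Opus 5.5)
Suppose two MDEs $\mathsf{M}=\{e_i\}_{i=1}^n$ and $\mathsf{N}=\{f_j\}_{j=1}^n$ exhibit preparational uncertainty, so that $\Det(\mathsf{M})\cap\Det(\mathsf{N})=\emptyset$. Both are MDEs, so they share the cardinality $n$ equal to the operational dimension. Each perfectly discriminates some MPD set: $\mathsf{D}_\mathsf{M}\subseteq\Det(\mathsf{M})$ and $\mathsf{D}_\mathsf{N}\subseteq\Det(\mathsf{N})$.

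The first observation I would make is a dichotomy for any extremal state $r$ and any MDE $\mathsf{M}$ discriminating an MPD set $\mathsf{D}$. Since $\sum_i e_i(r)=1$ and every $e_i(r)\in[0,1]$, exactly one of two things holds. Either some $e_i(r)=1$, so $r\in\Det(\mathsf{M})$. Or the set $\mathsf{M}'\coloneqq\{e_i : e_i(r)>0\}$ has at least two elements, each with $e_i(r)\in(0,1)$ and $\sum_{\mathsf{M}'}e_i(r)=1$. In the second case $r$ is, by Definition~\ref{Def::Superposition}, a superposition of $\mathsf{D}'=\{s\in\mathsf{D}: e_j(s)=1,\ e_j\in\mathsf{M}'\}$. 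In short, an extremal state is a superposition relative to $(\mathsf{D},\mathsf{M})$ if and only if it lies outside $\Det(\mathsf{M})$. The rest of the argument is bookkeeping built on this dichotomy.

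\emph{Part (i).} Take any $r\in\mathrm{Ext}[\mathcal{S}]$. If $r\notin\Det(\mathsf{M})$, the dichotomy applied to $(\mathsf{D}_\mathsf{M},\mathsf{M})$ shows that $r$ is a superposition. Otherwise $r\in\Det(\mathsf{M})$, and disjointness gives $r\notin\Det(\mathsf{N})$, so the dichotomy applied to $(\mathsf{D}_\mathsf{N},\mathsf{N})$ shows that $r$ is a superposition. Either way every extremal state is a superposition, which is complete superposition. I expect no obstacle here beyond checking that the definition of superposition needs only extremality of $r$ and an MDE for an MPD set, both of which we have.

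\emph{Part (ii).} Fix an MPD set $\mathsf{D}$ and let $s\in\mathrm{Ext}[\mathcal{S}]\setminus\mathsf{D}$. Two things must be arranged.
\begin{enumerate}[label=(\alph*), font=\normalfont]
\item $\mathsf{D}$ must be discriminated by some MDE $\mathsf{M}_\mathsf{D}$. This follows from the definition of an MPD set, which is PD via a measurement of extremal effects, together with maximality, which forces the number of outcomes to be at least $n$. If that measurement has more than $n$ outcomes, I would argue that it can be coarse-grained, or else note that the definition of MDE already requires $|\mathsf{M}|=|\mathsf{D}|$. This is a minor technical point.
\item $s$ must lie outside $\Det(\mathsf{M}_\mathsf{D})$, so that the dichotomy gives uniformity.
\end{enumerate}
Step (b) is where outcome sharpness enters. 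For each $e_i\in\mathsf{M}_\mathsf{D}$ there is a unique extremal state with $e_i(\cdot)=1$, and $s_i\in\mathsf{D}$ is such a state. Hence $\Det(\mathsf{M}_\mathsf{D})=\mathsf{D}$. Since $s\notin\mathsf{D}$, every $e_i(s)<1$, and the dichotomy makes $s$ a superposition of the subset $\mathsf{D}'_s\subseteq\mathsf{D}$ picked out by the outcomes with nonzero probability.

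The main obstacle I anticipate is conceptual rather than computational. Part (ii) as argued uses only outcome sharpness and never the existence of uncertainty. I would therefore check whether the hypothesis of uncertainty is needed to guarantee that $\mathrm{Ext}[\mathcal{S}]\setminus\mathsf{D}$ is nonempty, or that MDEs exist at all, so that uniform superposition holds non-vacuously. Here uncertainty helps: it forces at least two MDEs with disjoint $\Det$ sets. In particular it forces extremal states outside any given $\Det(\mathsf{M})$, so the property is nontrivial. I would state this explicitly in the proof.
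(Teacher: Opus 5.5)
Your proposal is correct and follows essentially the same route as the paper. For part (i), disjointness $\Det(\mathsf{M})\cap\Det(\mathsf{N})=\emptyset$ forces every extremal state to be probabilistic with respect to at least one of the two MDEs. For part (ii), outcome sharpness gives $\Det(\mathsf{M}_\mathsf{D})=\mathsf{D}$, so every extremal state outside $\mathsf{D}$ is a superposition of some subset of $\mathsf{D}$. Your remark that uncertainty only makes (ii) non-vacuous also matches the paper, which notes that $|\mathrm{Ext}[\mathcal{S}]|>n$ would already suffice.
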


The connection between superposition and uncertainty is made even more clear when outcome sharpness of MDEs is assumed from the outset. In such a class of GPTs---which include classical and quantum systems (for all $d$), Spekkens' model, and the toy theories GPT-1 and GPT-2---existence and weak preparational uncertainty can be naturally reformulated entirely in terms of superposition states, as follows (see Appendix \ref{Appendix: Restating uncertainty wrt superposition} for proofs):

\begin{restatable}{corollary}{UncertaintyAsSuperposition}
    Let $(\mathcal{S},\mathcal{E})$ be a state and effect space pair admitting outcome sharpness of MDEs. Then:
    
    \begin{enumerate} [label=(\roman*), font=\normalfont]
    \item existence of preparational uncertainty is equivalent to the existence of MPD sets, $\mathsf{D},\mathsf{D}'$, such that every $s \in \mathsf{D}$ is a superposition of $\mathsf{D}'$;
    \item weak uncertainty is equivalent to the condition that, for every MPD set $\mathsf{D}$, there exists an MPD set $\mathsf{D}'$ such that all $s \in \mathsf{D}$ are superpositions of $\mathsf{D}'$.
    
\end{enumerate}
\label{Lem::UncertaintyAsSuperposition}
\end{restatable}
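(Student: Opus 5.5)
Both parts follow the same route: use the definition of outcome sharpness of MDEs to translate statements about $\Det(\cdot)$ into statements about MPD sets, and then invoke Lemma~\ref{Thm::UncertaintyImpliesSuperposition} and Definition~\ref{Def::Superposition}. Outcome sharpness says that for every MDE $\mathsf{M}=\{e_i\}_{i=1}^d$ and each $e_i$ there is a unique extremal state $s_i$ with $e_i(s_i)=1$. Hence $\Det(\mathsf{M})=\{s_i\}_{i=1}^d=:\mathsf{D}_\mathsf{M}$, and this set is exactly the MPD set discriminated by $\mathsf{M}$. So there is a correspondence $\mathsf{M}\mapsto \mathsf{D}_\mathsf{M}=\Det(\mathsf{M})$. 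Conversely, every MPD set $\mathsf{D}$ comes with some MDE $\mathsf{M}$ that discriminates it, and sharpness forces $\Det(\mathsf{M})=\mathsf{D}$. This correspondence, read in both directions, is the backbone of the whole proof.

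For (i), forward direction: suppose $\mathsf{M},\mathsf{N}$ are MDEs with $\Det(\mathsf{M})\cap\Det(\mathsf{N})=\emptyset$. Set $\mathsf{D}=\Det(\mathsf{M})$ and $\mathsf{D}'=\Det(\mathsf{N})$, and take any $s\in\mathsf{D}$.
\begin{itemize}
\item $s$ is extremal and not in $\Det(\mathsf{N})$, so no $f_j\in\mathsf{N}$ gives $f_j(s)=1$.
\item Since $\sum_j f_j(s)=1$, the set $\mathsf{N}'=\{f_j : f_j(s)>0\}$ contains at least two effects, each with value in $(0,1)$, and their values sum to $1$.
\item By Definition~\ref{Def::Superposition}, $s$ is a superposition of the corresponding subset of $\mathsf{D}'$.
\end{itemize}

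This exactly mirrors the argument behind Lemma~\ref{Thm::UncertaintyImpliesSuperposition}(ii). One wording point must be fixed: ``superposition of $\mathsf{D}'$'' should be read as ``of some subset of $\mathsf{D}'$'', as in the uniform principle.

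Converse of (i): suppose every $s\in\mathsf{D}$ is a superposition relative to an MDE $\mathsf{N}$ discriminating $\mathsf{D}'$. Then each $s\in\mathsf{D}$ has all its $\mathsf{N}$-outcome probabilities strictly below $1$, so $s\notin\Det(\mathsf{N})$. Choose an MDE $\mathsf{M}$ for $\mathsf{D}$. By sharpness $\Det(\mathsf{M})=\mathsf{D}$, and so $\Det(\mathsf{M})\cap\Det(\mathsf{N})=\emptyset$.

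This converse hides the one real obstacle. The definition of superposition fixes an MDE $\mathsf{N}$ for $\mathsf{D}'$, and we need the MDE witnessing the superposition of each $s$ to be the same $\mathsf{N}$. Also, $\Det(\mathsf{N})$ must equal $\mathsf{D}'$ itself and not some other MPD set discriminated by $\mathsf{N}$. I handle this by using sharpness once more to show that the MDE discriminating a given MPD set is essentially unique in what it determines. Concretely, any two MDEs $\mathsf{N}_1,\mathsf{N}_2$ discriminating $\mathsf{D}'$ satisfy $\Det(\mathsf{N}_1)=\Det(\mathsf{N}_2)=\mathsf{D}'$. So the property ``$s\notin\Det(\mathsf{N})$'' depends only on $\mathsf{D}'$, and the choice of witness does not matter.

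For (ii), apply the correspondence pointwise, with quantifiers arranged in the same order.
\begin{itemize}
\item Weak uncertainty gives, for every MDE $\mathsf{M}$, an MDE $\mathsf{N}$ with $\Det(\mathsf{M})\cap\Det(\mathsf{N})=\emptyset$.
\item Given an MPD set $\mathsf{D}$, pick an MDE $\mathsf{M}$ discriminating it, take the $\mathsf{N}$ above, and set $\mathsf{D}'=\Det(\mathsf{N})$.
\item The forward argument of (i) then shows every $s\in\mathsf{D}$ is a superposition of $\mathsf{D}'$.
\item Conversely, given an MDE $\mathsf{M}$, apply the hypothesis to $\mathsf{D}=\Det(\mathsf{M})$, which is an MPD set by sharpness and maximality. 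This gives $\mathsf{D}'$, and the converse argument of (i) produces an MDE $\mathsf{N}$ with disjoint $\Det$.
\end{itemize}
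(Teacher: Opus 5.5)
Your proposal is correct and follows essentially the same route as the paper. Outcome sharpness identifies $\Det(\mathsf{M})$ with the unique MPD set discriminated by $\mathsf{M}$, so disjointness of $\Det(\mathsf{M})$ and $\Det(\mathsf{N})$ translates directly into every element of one MPD set being probabilistic with respect to the other's MDE, and back.

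Two points in your version are slightly more careful than the paper's, though neither changes the conclusion. First, you note that any MDE discriminating $\mathsf{D}'$ has $\Det$ equal to $\mathsf{D}'$, so the choice of witnessing MDE is irrelevant. Second, in (ii) you keep the direction ``$s\in\mathsf{D}$ is a superposition of $\mathsf{D}'$'', whereas the paper's argument concludes the reverse; this is harmless because, under sharpness, the condition reduces to the symmetric statement $\mathsf{D}\cap\mathsf{D}'=\emptyset$.
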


An analogous formulation for strong uncertainty, that does not rely on additional assumptions, is unknown to us. It should be noted, however, that the requirement of strong uncertainty effectively singles out real qubit quantum theory within the class of polygonal theories. In these models, preparational uncertainty arises only beyond the threshold $n=6$, wherein weak uncertainty is satisfied; however, strong uncertainty only emerges in the quantum ($n \to \infty$) limit.

\section{Discussion} \label{Sec:Discussion}

We introduced an operational notion of superposition (Definition~\ref{Def::Superposition}) that is independent of the traditional Hilbert space formalism of quantum theory and that captures the essential statistical information needed in a prepare-measure scenario to identify the presence of superposition in a theory. Using this, we introduced three principles of superposition (Definition~\ref{Def::SuperpositionPrinciples}) to capture the structural properties of theories related to superposition, that are otherwise equivalent in quantum theory. As our main result (Theorem~\ref{Thm::QuantumTensorProduct} and Corollary~\ref{corollary:mainresult}), we manage to show that one of these principles---mutual superposition---singles out the quantum tensor product as the largest composition of quantum state spaces that admit it. Therefore, given that the subsystems are quantum, this principle serves as an operational statement as an alternate to the composition rule in the axioms of textbook quantum mechanics. 

We also identified conditions under which the presence of superposition in subsystems carries over to any of their non-signalling compositions (Lemma~\ref{Thm::SuperpositionSubsystems}). In addition, we found that the presence of entanglement implies the existence of superposition (Lemma~\ref{Thm::EntanglementAndSuperposition}), although the reverse implication is false. Finally, we found that preparational uncertainty, which is usually studied separately from superposition and entanglement, implies the existence of superposition as well (Lemma~\ref{Thm::UncertaintyImpliesSuperposition}). In fact, uncertainty can be seen as a special form of superposition (Corollary~\ref{Lem::UncertaintyAsSuperposition}).

There have been attempts at ruling out non-quantum tensor product compositions of quantum systems. However, to our knowledge, these principles are either not operational in nature, or they do not manage to single out the quantum tensor product~\cite{Weyl1931GroupsQuantumMechanics,AertsDaubechies1978TensorProduct,Wootters1990LocalAccessibility,Hardy2001FiveReasonableAxioms,DakicBrukner2011EntanglementSpecial,MasanesMueller2011PhysicalRequirements,ChiribellaDArianoPerinotti2011Informational,CarcassiMacconeAidala2021FourPostulates,SenEtAl2022TimelikeCorrelations,PatraEtAl2023InformationCausality,ErbaPerinotti2024CompositionRule}. 
In this regard, the principle of mutual superposition, identifying the quantum tensor product, provides a partial answer towards a reconstruction of quantum theory from physically motivated principles. However, we need to assume that the subsystems are quantum. Whether a set of operational conditions sufficient to identify the quantum state space for single systems can be derived remains an open problem. Another direction would be to extend our results to infinite dimensions.

Recently proposed tests for the quantumness of gravity~\cite{marletto_gravitationally_2017,PhysRevLett.119.240401} start with two masses that are \textit{held} in superposition of positions which are then subject to the gravitational field. After the evolution, if the masses are entangled, then one infers that the action of the gravitational field cannot be modelled by a channel that can be decomposed as local operations and classical communication. However, since these proposals are formulated within the Hilbert-space formalism, in particular with the example of the Mach-Zehnder interferometer, it is not so clear what it means operationally to \textit{hold} a mass in a superposition of positions. Here, one should refrain from interpreting a superposition of positions as implying that a mass is simultaneously localised at two distinct locations. Since mass generates a gravitational field, such an interpretation would already presuppose that gravity can be non-classical, which is at the end what one would like to show or falsify. A useful direction of research would be to use an operational notion of superposition to clearly identify which degrees of freedom are in superposition and would undergo entanglement in the protocols described.

\section{Acknowledgements}
The authors thank Stefan Weigert and Marco Erba for useful discussions. VF was supported by the Leverhulme Trust, grant number RPG-2024-201. KS l’Agence Nationale de la Recherche (ANR) project ANR-22-CE47-001.  

\bibliography{GPTs}

\newpage

\onecolumngrid
\appendix

\section{Superposition and Preparational Uncertainty in GPTs -- Examples} \label{Appendix: superposition GPT examples}

\subsection{Quantum and Classical theories}

\paragraph{Superposition}
We mentioned in the justification following Definition~\ref{Def::SuperpositionPrinciples} that, by construction, quantum theory satisfies all three superposition principles introduced in Definition~\ref{Def::SuperpositionPrinciples}. Specifically, the fact that for every (extremal) unit-trace positive semi-definite matrix, there exists a basis of the Hilbert space in which it is not diagonal implies that complete superposition is satisfied. Uniform superposition follows from the outcome sharpness of MDEs (see Sec.\ \ref{Sec::RelationToUncertainty}), according to which every MDE $\mathsf{M}$ of a quantum system (that is, every non-degenerate PVM) perfectly distinguishes a unique MPD set $\mathsf{D}$. Consequently, every state not belonging to $\mathsf{D}$ will be probabilistic with respect to $\mathsf{M}$. Finally, mutual superposition follows from the property that, if $\ket \psi = \sum_{i =1}^{d^{\prime}} c_i \ket\phi_i$ holds with $c_i \neq 0$ for all $i$, then there exists an orthonormal basis of $\mathcal{H}$ including $\ket\psi$, e.g.\ $\{ \ket\psi, \ket\psi_1, \dots, \ket\psi_{d-1} \}$, in which the operators $\ket{\Phi_i}\rangle$ are not diagonal. Existence of superposition and all three associated principles are preserved under the standard composition of quantum system, which ensures that the state (effect) space of a composite system with, say, two subsystems with operational dimensions $d_A$ and $d_B$, respectively, is isomorphic to the state (effect) space of a non-composite system with operational dimension $d_Ad_B$.

Classical theories do not exhibit superposition as per Definition~\ref{Def::Superposition}. They are characterised by simplicial state spaces, where the vertices (i.e.\ the set of all allowed extremal states) form the unique MPD set in the theory. Likewise, classical composite systems, for which the minimal and maximal tensor products coincide, feature a single product MDE.

\paragraph{Preparational Uncertainty}

In quantum theory, any pair of MDEs $\mathsf{M}=\{ P_i \}_{i=1}^d$ and $\mathsf{M}=\{ Q_i \}_{j=1}^d$ such that $[P_i,Q_j]\neq 0$ for all $i,j$ exhibit preparational uncertainty. In every Hilbert space $\mathcal{H}$ with finite $d \coloneqq \text{dim}(\mathcal{H})$, given any MDE $\mathsf{M}$, there exist infinitely many MDEs $\mathsf{N}$ such that $(\mathsf{M},\mathsf{N})$ exhibits uncertainty, hence weak uncertainty (Definition~\ref{Def::ExistencePrepUncertainty}) is satisfied for both single and composite systems. However, for $d\geq 3$, a given rank-1 projector $P_i$ belongs to infinitely many different MDEs. This implies that strong uncertainty does not apply to quantum systems, both single and composite, with the exception of a single qubit. In $d=2$, in fact, each $P_i$ belongs to a unique MDE: in the terminology of Gleason \cite{Gleason1975}, projective qubit measurements do not \textit{intertwine}. Classical theories do not exhibit preparational uncertainty as each single or composite system features a single MDE.

\subsection{GLT and Boxworld}

\paragraph{Superposition}
In GLT, the MDEs are given by collections of ray-extremal effects that sum up to the unit effect. Since the inner product between any extremal state and a ray-extremal effect is either 0 or 1, there does not exist an extremal state which is probabilistic with respect to an MDE measurement. Consequently, GLT does not admit any principle of superposition. On the other hand, in Boxworld, where the MDEs are also formed by collection of ray-extremal effects, there exist extremal states, namely PR-boxes, that are probabilistic. We will show this with an example. We represent a state by

\begin{equation}
\begin{pmatrix}
   p(0,0|0,0)  &   p(0,1|0,0) &   p(0,0|0,1) &  p(0,1|0,1)  \\
  p(1,0|0,0)   &   p(1,1|0,0) &   p(1,0|0,1) &  p(1,1|0,1)  \\ 
   p(0,0|1,0)   &  p(0,1|1,0)  &  p(0,0|1,1)  &   p(0,1|1,1) \\
   p(1,0|1,0)  &   p(1,1|1,0)  &  p(1,0|1,1)  &  p(1,1|1,1)  \\
\end{pmatrix}
\end{equation}
where $p(a,b|x,y)$ denotes the probability of getting the outcomes $a$ and $b$ when fiducial measurements $x$ and $y$ are performed. In this representation, the PR box is given as:
\begin{equation}
\frac{1}{2}\begin{pmatrix}
   1  &   0 &   1 &  0  \\
  0   &   1 &   0 &  1  \\ 
  1   &  0  & 0  &   1 \\
   0  &   1  &  1  &  0  \\
\end{pmatrix}
\end{equation}
and the MDE with respect to which it is probabilistic is 
\begin{equation}
\left\{ \begin{pmatrix}
   1  &   0 &   0 &  0  \\
  0   &   0 &   0 &  0  \\ 
  0   &  0  & 0  &   0 \\
   0  &   0  &  0  &  0  \\
\end{pmatrix}, \begin{pmatrix}
   0  &   1 &   0 &  0  \\
  0   &   0 &   0 &  0  \\ 
  0   &  0  & 0  &   0 \\
   0  &   0  &  0  &  0  \\
\end{pmatrix},\begin{pmatrix}
   0  &   0 &   0 &  0  \\
  1   &   0 &   0 &  0  \\ 
  0   &  0  & 0  &   0 \\
   0  &   0  &  0  &  0  \\
\end{pmatrix}, \begin{pmatrix}
   0  &   0 &   0 &  0  \\
  0   &   1 &   0 &  0  \\ 
  0   &  0  & 0  &   0 \\
   0  &   0  &  0  &  0  \\
\end{pmatrix} \right\}.
\end{equation}
This MDE perfectly discriminates the following collection of extremal states:
\begin{equation}
\left\{ \begin{pmatrix}
   1  &   0 &   1 &  0  \\
  0   &   0 &   0 &  0  \\ 
  1   &  0  & 1  &   0 \\
   0  &   0  &  0  &  0  \\
\end{pmatrix}, \begin{pmatrix}
   0  &   1 &   1 &  0  \\
  0   &   0 &   0 &  0  \\ 
  0   &  1  & 1  &   0 \\
   0  &   0  &  0  &  0  \\
\end{pmatrix},\begin{pmatrix}
   0  &   0 &   0 &  0  \\
  1   &   0 &   1 &  0  \\ 
  1   &  0  & 1  &   0 \\
   0  &   0  &  0  &  0  \\
\end{pmatrix}, \begin{pmatrix}
   0  &   0 &   0 &  0  \\
  0   &   1 &   1 &  0  \\ 
  0   &  1 & 1  &   0 \\
   0  &   0  &  0  &  0  \\
\end{pmatrix} \right\},
\end{equation}
which are all separable states. Therefore the PR box is a superposition of these separable states. This is analogous to the maximally entangled state $\ket{\Phi^+}\rangle$ being a superposition of the separable states $\ket{00}\rangle$ and $\ket{11}\rangle$. However, since there is no MDE that can discriminate four PR boxes, the stronger superposition principles of Definition~\ref{Def::SuperpositionPrinciples} are not admitted in Boxworld.

\paragraph{Preparational Uncertainty}
Following from the fact that the MDEs in both GLT and Boxworld can only perfectly discriminate extremal separable states, and that the inner product of such states with any ray extremal effect is either 0 or 1, we infer that neither of these theories admit existence of preparational uncertainty. Therefore, the reamaining principles are not admitted as well.

\subsection{Spekkens' toy model}

\paragraph{Superposition}

Spekkens’ toy theory \cite{spekkens_evidence_2007} was originally introduced to show that many phenomena of quantum theory can be explained via a ``classical'' epistemic restriction on underlying hidden variables. The model is functionally equivalent to the stabilizer formalism of a single qubit \cite{pusey2012stabilizer}, and comprises six extremal (epistemic) states (see Fig.\ \ref{Fig::SpekkensOctahedron}) and three MDE measurements (analogous to Pauli $X,Y,Z$ measurements). Each MDE distinguishes a unique pair of extremal states.

\begin{figure}
    \centering
    \includegraphics[width=0.3\linewidth]{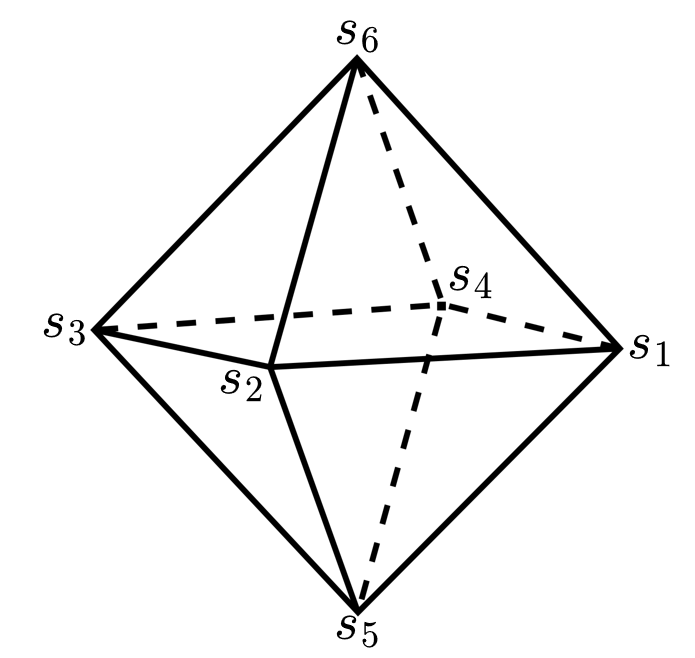}
    \caption{Sketch of the state space of Spekkens' toy model, with the six extremal states labeled.}
    \label{Fig::SpekkensOctahedron}
\end{figure}

By construction, for any extremal state $s$ that does \textit{not} belong to the unique MPD set associated with a measurement $\mathsf{M}=\{e_1, e_2\}$, the model satisfies the condition $e_i(s)=1/2$ for $i=1,2$. This defining characteristic, known as the \textit{equivalence-balance principle}, ensures that the epistemic states not distinguished by a measurement appear maximally uncertain with respect to its outcomes. This principle directly implies the existence of superposition within the theory and further guarantees that the model satisfies all three superposition principles in Definition~\ref{Def::SuperpositionPrinciples}.

\paragraph{Preparational Uncertainty}

The equivalence-balance principle ensures that all three pairs of MDEs allowed in the theory exhibit uncertainty (strong uncertainty). It follows that, along with qubit quantum theory, Spekkens' model is another example of GPT that satisfies all uncertainty principles of Definition~\ref{Def::ExistencePrepUncertainty}.

\subsection{Regular $n$-gons} \label{Appendix:RegularNgons}

\paragraph{Superposition}
For $n > 4$, let $\mathcal{S}_n$ be the state space formed by the convex hull of the extremal points
\begin{equation}
\omega_n(i)=
  \begin{pmatrix}
    r_n\cos\frac{2\pi i}{n}\\
    r_n\sin\frac{2\pi i}{n}\\
    1
  \end{pmatrix},
\end{equation}
where 
\begin{equation}
r_n=\sqrt{\frac{1}{\cos(\pi/n)}},
\end{equation}
which define the vertices of the regular $n$-gon \cite{Janotta_2011, janotta2013generalized, massar2014information}. The ray-extremal effects in the effect space $\mathcal{E}_n$ are 
\begin{equation}
e_n(i)=
  \begin{cases}
    \displaystyle
    \frac12
    \begin{pmatrix}
      r_n\cos\frac{(2i+1)\pi}{n}\\
      r_n\sin\frac{(2i+1)\pi}{n}\\
      1
    \end{pmatrix}, & n\text{ even},\\[14pt]
    \displaystyle
    \frac{1}{1+r_n^{2}}
    \begin{pmatrix}
      r_n\cos\frac{2\pi i}{n}\\
      r_n\sin\frac{2\pi i}{n}\\
      1
    \end{pmatrix}, & n\text{ odd},
  \end{cases}
\end{equation}
with the unit effect being $u=(0,0,1)^{\!\top}$. When $n$ is odd, the complements $\{u-e_n(i)\}_i$ are also extremal effects (but not ray-extremal), and the corresponding GPT $(\mathcal{S}_n,\mathcal{E}_n)$ is (strongly) self-dual.
To simplify the expressions, we introduce the planar angles for effects and states, respectively:
\begin{equation}
    \theta_j^n = \begin{cases}
        \frac{(2j+1)\pi}{n}, & n \text{ even} \\
        \frac{2\pi j}{n}, & n \text{ odd}
    \end{cases} \,, \quad \quad \quad \varphi_i^n = \frac{2 \pi i}{n} \, .
\end{equation}
Then the inner product takes the form
\begin{equation}\label{eq:eval}
  e_n(j)\bigl(\omega_n(i)\bigr)=
  \frac{1+r_n^{2}\cos(\vartheta_j^n-\varphi_i^n)}{D_n},
\end{equation}
where
\begin{equation}
D_n=
  \begin{cases}
    2, & n\text{ even},\\
    1+r_n^{2}, & n\text{ odd}.
  \end{cases} 
\end{equation}

For all $n$, the operational dimension is two. In particular, for odd $n$, one has $ e_n(j)\bigl(\omega_n(i)\bigr)=1$ when $\theta_j^n-\varphi_i^n=0$---that is, when $i=j$. This means that, for each extremal state $\omega_n(i)$, there exists a unique ray-extremal effect, namely $e_n(i)$, for which it gives probability $1$ (while giving probability strictly less than $1$ for any other state). 
 
In addition, the equation $ e_n(j)\bigl(\omega_n(i)\bigr)=0$ is satisfied when $\cos(\vartheta_j^n-\varphi_i^n) = - \cos(\pi/n)$, i.e.\ when $\theta_j^n-\varphi_i^n=\pi \pm \pi/n $ (mod $2\pi$). Therefore, for each $\omega_n(i)$, there are two ray-extremal effects $e_n(j)$, where $j=i + \frac{(n \pm 1)}{2}$ (mod $n$), assigning it probability zero. For example, both $\{\omega_7(2), \omega_7(6)\}$ and $\{\omega_7(2), \omega_7(5)\}$ are MPD sets that are discriminated by the MDE measurement $\{e_7(2), u-e_7(2) \}$.

Repeating the calculation for even $n$, we find that $ e_n(j)\bigl(\omega_n(i)\bigr)=1$ when $\theta_j^n-\varphi_i^n= \pm \pi/n$. Fixing $i$, the two solutions are $j=i$ and $j=i-1$ (mod $n$).

Similarly, $ e_n(j)\bigl(\omega_n(i)\bigr)=0$ is satisfied when $j=i+n/2$ or when $j=i+(n-2)/2$ (mod $n$). It follows that the same MDE measurement $\{e_n(j), u-e_n(j)\}$ perfectly distinguishes a total of \textit{four} MPD sets of extremal states.

Fig.\ \ref{Fig::PentagonHexagon} depicts the state spaces of the pentagon ($n=5$) and the hexagon ($n=6$) models, highlighting one MDE and the maximal sets associated with it.

\begin{figure}
    \centering
    \includegraphics[width=0.6\linewidth]{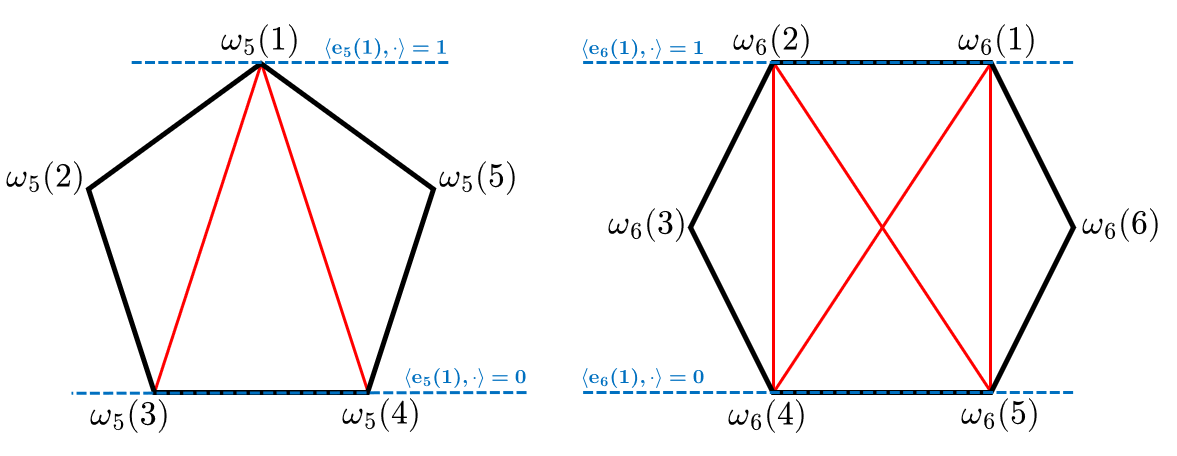}
    \caption{The pentagon ($n=5$; left) and hexagon ($n=6$; right) state spaces. Pairs of vertices connected by red segments constitute the MPD sets distinguished by the MDE $\mathsf{M}=\{ e(1), u-e(1)\}$. Points lying on the top (resp.\ botom) blue line yield probability $1$ (resp.\ $0$) for the effect $e_1$.}
    \label{Fig::PentagonHexagon}
\end{figure}

For any odd $n$, a given MDE $\mathsf{M} = \{e_n(j), u-e_n(j)\}$ has a total of three deterministic states, $\Det(\mathsf{M})=\{\omega_n(j), \omega_n(j-(n+1)/2), \omega_n(j-(n-1)/2)\}$ (all sums of indices are mod $n$). Graphically, MPD sets always corresponds to pairs of vertices where one is an endpoint of the segment opposite the other vertex. It follows that these polygonal GPTs exhibit superposition, since they all count more than three vertices. In particular, complete superposition holds: every extremal state is probabilistic with respect to some MDE. Uniform superposition also applies: each MPD set $\mathsf{D}$ is distinguished by exactly two MDEs, and every extremal state not in $\mathsf{D}$ is probabilistic to at least one of them (Fig. \ref{Fig::PentagonSuperposition}). Moreover, given any state $r$ in a superposition of some MPD set $\mathsf{D}$, it is possible to find a extremal state $t$ at one end of the opposing side of the polygon, such that \textit{both} elements of $\mathsf{D}$ are probabilistic with respect to at least one of the two MDEs that distinguish $\{ r,t \}$. Therefore, mutual superposition applies to every regular $n$-gon with odd $n$.

\begin{figure}
    \centering
    \includegraphics[width=0.3\linewidth]{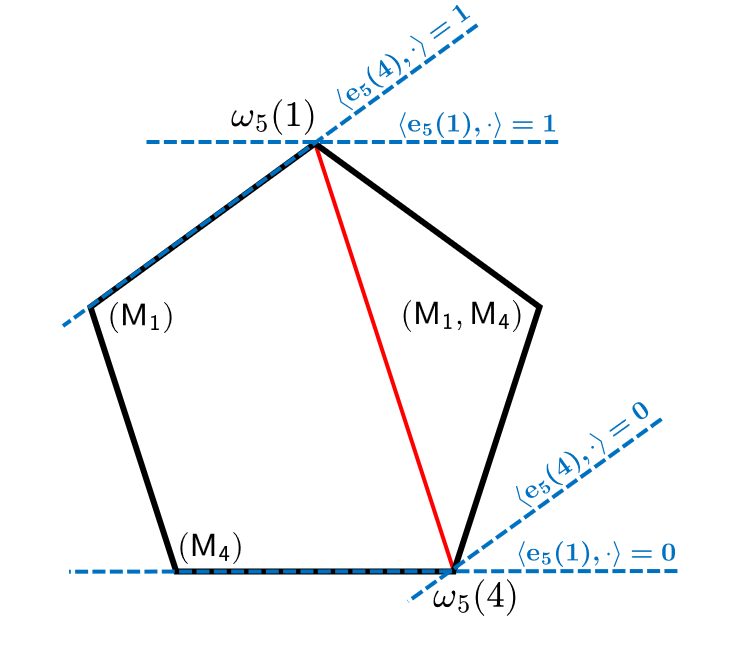}
    \caption{State space of the regular pentagon ($n=5$). The MPD set $\{\omega_5(1), \omega_5(4)\}$ is highlighted; vertices lying on the blue dotted lines correspond to the deterministic extremal states of either the MDEs $\mathsf{M}_1=\{ e_5(1), u-e_5(1)\}$ or $\mathsf{M}_4=\{ e_5(4), u-e_5(4)\}$. For each vertex not belonging to the MPD set, we indicate the MDE(s) for which they are probabilistic. Those that are probabilistic with respect to a single MDE are the extremal states that, while in a superposition of $\{\omega_5(1), \omega_5(4)\}$, form an MPD set with one of its elements.}
    \label{Fig::PentagonSuperposition}
\end{figure}

For even $n$, a given MDE $\mathsf{M} = \{e_n(j), u-e_n(j)\}  =\{ e_n(j), e_n (j+n/2)\}$ has four deterministic states: $\Det(\mathsf{M})=\{\omega_n(j), \omega_n(j+1), \omega_n(j-n/2), \omega_n(j-(n-2)/2)\}$ (all sums of indices are mod $n$). Superposition exists in these models since $n>4$; in particular, complete superposition applies because of the symmetric properties of the state and effect spaces: as in the odd case, each extremal state is probabilistic with respect to some MDE. In contrast with the odd case, however, here are two classes of MPD sets. In fact, any vertex forms an MPD pair with its antipodal vertex, but also with either of the two vertices adjacent to the antipodal vertex. In the first case (where the vertices are antipodal), there are two MDEs that distinguish them; whereas in the second case (where there vertices are not antipodal) there is only one MDE that distinguishes them (Fig.\ \ref{Fig::HexagonSuperposition}). In the first case, we find that any third extremal state is probabilistic with at least one of these MDEs---this is the same argument that justifies uniform superposition for odd $n$. However, for MPD pairs of the second kind, there will always exist two additional extremal states that are deterministic with respect to the unique MDE associated with it. This is true for \textit{any} even $n>4$: we conclude that, in general, uniform superposition does not hold in these models. A separate treatment for the two kinds of MPD sets is required to check whether mutual superposition holds. For a hexagon ($n=6$), we can provide an invalidating example: consider an arbitrary MPD set $\{ s_1, s_2 \}$ of the second type; the vertex between these two extremal states, call it $s_3$, will be probabilistic with respect to the unique MDE associated with $\{ s_1, s_2 \}$. Hence, it is a superposition of the two states. However, whatever MPD set we consider that includes $s_3$, all of their associated MDEs will include either $s_1$ or $s_2$ among their deterministic states (Fig.\ \ref{Fig::HexagonSuperposition}). Thus, mutual superposition is violated for $n=6$. The same is not true for larger even number of sides. In Fig.\ \ref{Fig::OctagonSuperposition}, we show this visually for the case of $n=8$: for either kind of MPD set, the superposition states (in blue) can be joined to form MPD sets with respect to which both of the original states are superpositions. 

\begin{figure}
    \centering
    \includegraphics[width=0.5\linewidth]{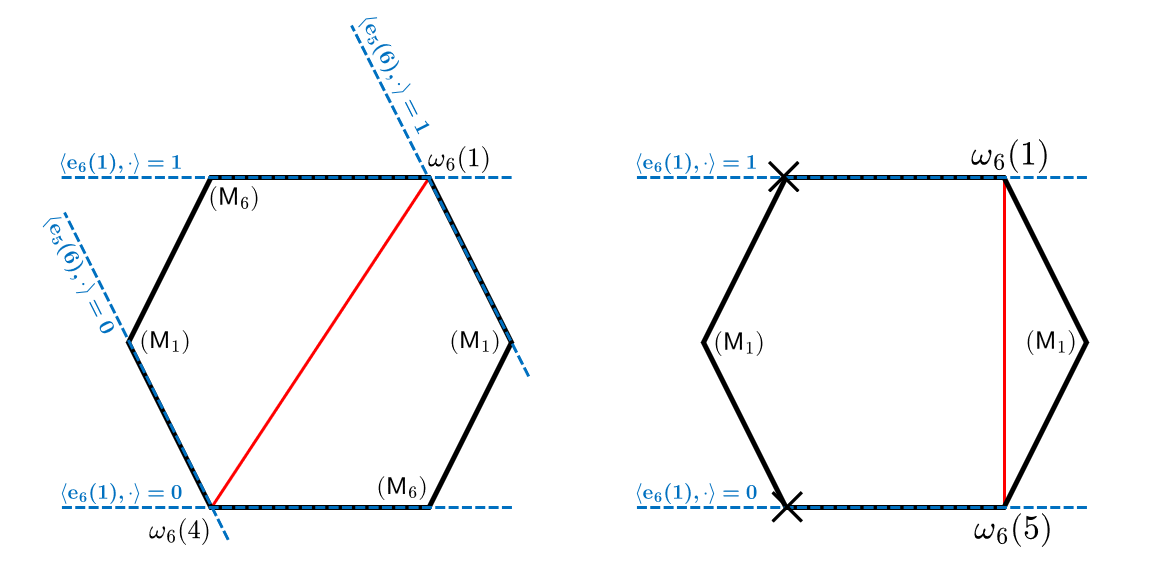}
    \caption{Regular hexagon GPT. We apply the same analysis as in Fig.\ \ref{Fig::PentagonSuperposition} to the MPD sets $\{\omega(1), \omega(4)\}$ (left; antipodal vertices) and $\{\omega(1), \omega(6)\}$ (right; non-antipodal vertices). Left: all other extremal states states that are probabilistic with respect to only one MDE between $\mathsf{M}_1=\{e_6(1), u-e_6(1)\}$ and $\mathsf{M}_6=\{e_6(6), u-e_6(6)\}$ hence, while in a superposition of $\{\omega(1), \omega(4)\}$, they each form a MPD set with one of its elements. Right: the existence of two vertices (distinguihsed by crosses) that are not probabilistic with respect to $\mathsf{M}_1$ demonstrates that the hexagon fails to satisfy uniform superposition.}
    \label{Fig::HexagonSuperposition}
\end{figure}

\begin{figure}
    \centering
    \includegraphics[width=0.4\linewidth]{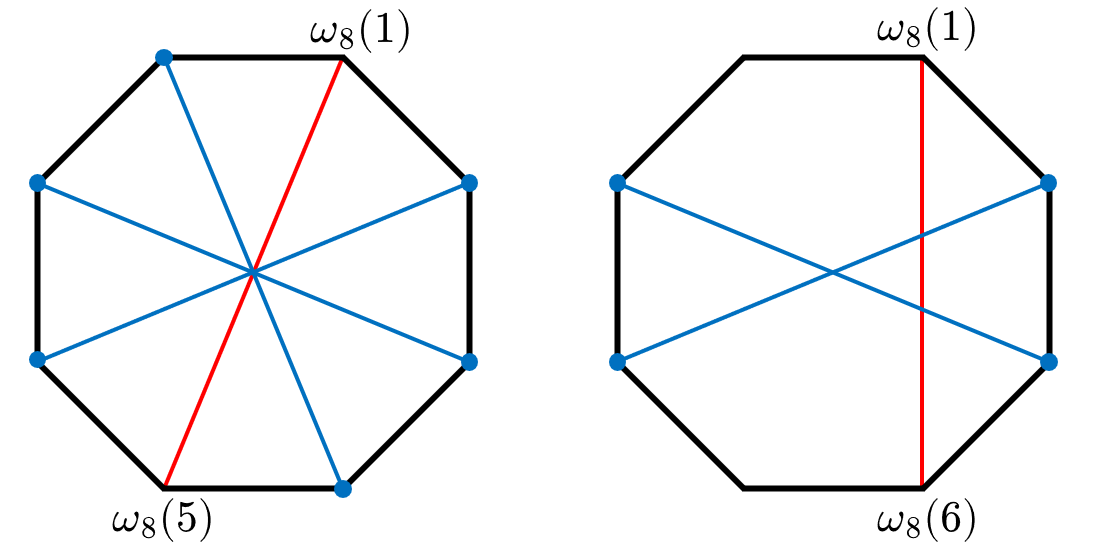}
    \caption{Regular octagon GPT ($n=8$): for each type of MPD set $\mathsf{D}$ (left: antipodal; right: non-antipodal), the states in a superposition of $\mathsf{D}$ (indicated by blue vertices) can all be connected (e.g., in the way shown by the blue segments) to form MPD pairs with respect to which both elements of $\mathsf{D}$ are a superposition of.}
    \label{Fig::OctagonSuperposition}
\end{figure}

As noted in the main text, regular $n$-gon GPTs exhibit a peculiar feature of superposition, which can be extrapolated from Figs.\ \ref{Fig::PentagonSuperposition}-\ref{Fig::HexagonSuperposition}. Regardless of the value of $n$ and the kind of MPD set considered, one can always identify a state $r$ that is a superposition of a pair $\{s_1,s_2\}$, yet simultaneously forms a valid MPD set with $s_1$ or with $s_2$. Such property hinges on the fact that a single MDE in a regular $n$-gon GPT perfectly distinguishes multiple MPD pairs---i.e.\ these models do not satisfy outcome sharpness of MDEs (see Sec.\ \ref{Sec::RelationToUncertainty}), in contrast with qubit quantum theory, where outcome sharpness of MDEs holds and this phenomenon is absent.

\paragraph{Preparational Uncertainty}
For both even and odd $n$, the fact that every extremal state is deterministic with respect to multiple MDEs implies that none of the GPTs in this class satisfy strong preparational uncertainty. Specifically, preparational uncertainty does not exist for $n=5,6$: for these regular polygons, the sets of deterministic states of any two MDEs always overlap (see Figure \ref{Fig::PentagonHexagon}). As the number of sides increases, such pairs do exist (cf.\ the octagon in Figure \ref{Fig::OctagonSuperposition}); in particular, due to the $n$-fold rotational symmetry of the $n$-gons, each MDE will belong to at least a pair of uncertain MDEs, i.e.\ weak preparational uncertainty applies.

\subsection{GPT-1 and GPT-2}
\label{Appendix::GPT1_2}

\paragraph{Superposition}
We call GPT-1 the theory whose state space $\mathcal{S}$ is a subset of the state space of a qutrit ($d=3$) and given by the
convex hull of $\ket{0}\rangle$ and the states $\ket{\psi}\rangle$, where 
\begin{equation}
\ket{\psi}=a\ket{1}+b\ket{2} : \alpha,\beta \in \mathbb{C}, |a|^2 + |b|^2=1 \, .
\end{equation}

The effect space $\mathcal{E}$ is the convex hull of the rank-1 qutrit projectors $\ket{0}\rangle$ and $\ket{\psi}\rangle$, the unit effect $u=\mathbb{I}_3$, and the zero operator. The MDEs in GPT-1 take the form
\begin{equation}
    \mathsf{M}_{\psi} = \{\ket{0}\rangle, \ket{\psi}\rangle, \ket{\psi^{\perp}}\rangle\} \, .
\end{equation}

Each $\ket{\psi}\rangle \neq \ket{0}\rangle$ is deterministic with respect to a single MDE, hence GPT-1 exhibits superposition as per Definition~\ref{Def::SuperpositionPrinciples}. However, by construction, the state $\ket{0}\rangle$ is deterministic with respect to all MDEs: it follows that complete superposition is not satisfied. At the same time, however, every state $\ket{\phi} \rangle \in \mathcal{S}$ such that $\ket{\phi}$ does not belong to the basis $\{ \ket{0}, \ket{\psi}, \ket{\psi^{\perp}} \}$, is probabilistic with respect to the MDE associated with it, hence uniform superposition holds. Moreover, as GPT-1 inherits outcome sharpness of MDEs from qutrit quantum theory, $\ket \phi \rangle$ identifies a unique MDE, namely $\{ \ket{0} \rangle, \ket{\phi}\rangle, \ket{\phi^{\perp}}\rangle \}$, with respect to which both $\ket{\psi}\rangle$ and $ \ket{\psi^{\perp}}\rangle $ are probabilistic: that is, mutual superposition is satisfied.

We call GPT-2 the theory whose state space is a subset of the state space of a qubit ($d=2$) and given by the convex hull of the three extremal qubit states $\ket 0 \rangle$, $\ket 1 \rangle$, and $\ket + \rangle$, where
\begin{equation}
    \ket{+}= \frac{\ket{1}+\ket{2}}{\sqrt{2}} \, .
\end{equation}
The extremal effects, as in GPT-1, are the corresponding rank-1 projectors $\ket 0 \rangle$, $\ket 1 \rangle$, $\ket + \rangle$, and those obtained by subtracting them to the unit effect $u=\mathbb{I}_2$. In GPT-2, there is a unique MPD set, $\mathsf{D} = \{\ket{0}\rangle, \ket{1}\rangle\}$, which is its own MDE, i.e.\ $\mathsf{M}=\mathsf{D}$. The only other extremal state, $\ket{+}\rangle$, is a superposition of $\mathsf{D}$ (i.e.\ uniform superposition applies). However, it does not belong to any MPD set (i.e.\ mutual superposition does not apply), which also means that neither element of $\mathsf{D}$ is a superposition (i.e.\ complete superposition does not apply).

\paragraph{Preparational Uncertainty}
In GPT-1, the extremal state $\ket{0}\rangle$ is deterministic with respect to all MDEs, hence the system does not exhibit preparational uncertainty. In GPT-2, there exist only one MPD, distinguished by a unique MDE, hence in this model there is also no preparational uncertainty.

\section{Proofs of Results}
\subsection{Superposition from Subsystems: Proof of Lemma~\ref{Thm::SuperpositionSubsystems}} \label{Appendix: superposition from subsystems}

\SuperpositionFromSubsystems*

\begin{proof}

    (i). If $(\mathcal{S}_{1}, \mathcal{E}_{1})$ exhibits superposition, then there exists some $s \in \text{Ext}[\mathcal{S}_1]$ and some MDE $\mathsf{M}_1=\{e_i\}$ such that $s \notin \Det(\mathsf{M}_1)$. Let $t$ be an arbitrary extremal element of $\mathcal{S}_{2}$ and $\mathsf{M}_2=\{f_j\}$ be an arbitrary MDE of system $2$, then $e_i \otimes f_j (s \otimes t) = e_i(s) f_j(t) <1$ for all $i$ and $j$, hence $s \otimes t \in \text{Ext}[\mathcal{S}_{1,2} ]$ is a superposition of (some of the elements of) any MPD set distinguished by $\mathsf{M}_1 \otimes \mathsf{M}_2$. Hence, superposition exists in the composite system $(\mathcal{S}_{1,2}, \mathcal{E}_{1,2})$.

    (ii). For any pair of MDEs $\mathsf{M}_1$ and $\mathsf{M}_2$ of systems $1$ and $2$, respectively, it follows that every composite state $z \in \text{Ext}[\mathcal{S}_{1,2}] \setminus \Det(\mathsf{M}_1 \otimes \mathsf{M}_2)$ is a superposition. Instead, for any $z \in \Det(\mathsf{M}_1 \otimes \mathsf{M}_2)$, there exists some $e_{i*} \otimes f_{j*}\in \mathsf{M}_1 \otimes \mathsf{M}_2$ such that $(e_{i*} \otimes f_{j*}) (z)=1$. Then $e_i(z_1) = e_i \otimes u_2 (z) = \sum_j (e_{i} \otimes f_j) (z) = \sum_j \delta_{ii*}\delta_{jj*} = \delta_{ii*}$. It follows that the marginal state $z_1$ is a (not necessarily extremal) state in $\mathcal{S}_{1}$ that is deterministic with respect to $\mathsf{M}_1$; in particular, $z_1 = \sum_k p_k s^{i*}_k$ where $\{p_k\}$ is a probability distribution, $s^{i*}_k \in \text{Ext}[\mathcal{S}_1]$ and $e_{i*}(s^{i*}_k)=1$ for all $k$. Consider some $k*$ such that $p_{k*} \neq 0$. Due to uniform superposition for subsystem $1$, it follows that for the state $s^{i*}_{k*}$ in the decomposition of $z_1$ there exists some MDE $\mathsf{N}_1$ such that $s^{i*}_{k*} \notin \Det(\mathsf{N}_1)$, i.e.\ $e'_l(s^{i*}_{k*}) <1$ for all $e'_l \in \mathsf{N}_1$. But this implies that $e'_l(z_1) = \sum_k p_k e'_l(s^{i*}_{k}) <1$ for all $l$, hence $z_1$ is not deterministic with respect to $\mathsf{N}_1$; in particular, $z_1$ cannot be written as a mixture of extremal states that give probability $1$ to \textit{the same} element of $\mathsf{N}_1$. Hence, for any choice of MDE $\mathsf{N}_2$ of subsystem $2$, we have that $z \notin \Det(\mathsf{N}_1\otimes \mathsf{N}_2)$. We conclude that for every element of $\Det(\mathsf{M}_1 \otimes \mathsf{M}_2)$ there must exist another product MDE $\mathsf{N}_1 \otimes \mathsf{N}_2$ for which it is probabilistic. Hence, every extremal element of the composite state space $\mathcal{S}_{1,2}$ is a superposition.

    (iii). This is an immediate consequence of the more general fact that, if outcome sharpness of MDEs holds in a GPT, then the existence of superposition alone implies uniform superposition. In fact, if each MDE $\mathsf{M}$ is associated with a unique MPD $\mathsf{D}_{\mathsf{M}}$, then every state $s \in \text{Ext}[\mathcal{S}]\setminus \mathsf{D}_{\mathsf{M}}$ will be probabilistic with respect to $\mathsf{M}$. Such extremal states exist if and only if superposition exists, i.e.\ when $\text{Ext}[\mathcal{S}] \supset  \mathsf{D}_{\mathsf{M}}$. In the context of the composite system $(\mathcal{S}_{1,2}, \mathcal{E}_{1,2})$, uniform (hence existence of) superposition in $(\mathcal{S}_{1}, \mathcal{E}_{1})$ implies existence of superposition in $(\mathcal{S}_{1,2}, \mathcal{E}_{1,2})$ (cf.\ point (i) above), and since $(\mathcal{S}_{1,2}, \mathcal{E}_{1,2})$ satisfies outcome sharpness of MDEs in the composite system, uniform superposition also applies.

\end{proof}

\subsection{Entanglement as a form of Superposition: Proof of Lemma~\ref{Thm::EntanglementAndSuperposition}} \label{Appendix: entanglement in superposition}

\EntaglementImpliesSuperposition*

\begin{proof}
    Let $z \in \text{Ext}[\mathcal{S}_{1,2} ]$ be entangled, and suppose the composite system does \textit{not} exhibit superposition. This means that $z$ is deterministic with respect to all MDEs of the composite system. In particular, we must have that $z \in \Det(\mathsf{M}_1 \otimes \mathsf{M}_2 )$ for all product MDEs $\mathsf{M}_1 \otimes \mathsf{M}_2$. We have $(e_i \otimes f_j) (z)=1$ for some $e_i \in \mathsf{M}_1$ and $f_j \in \mathsf{M}_2$ (and 0 for all other effects of $\mathsf{M}_1 \otimes \mathsf{M}_2$), hence $\sum_j (e_i \otimes f_j) (z)= (e_i \otimes \sum_j f_j)(z) =(e_i \otimes u_2)(z) \equiv e_i (z_1)=1 $, where $z_1$ is the (mixed) reduced state of $z$ for subsystem $1$ and $u_2$ is the unit effect of system $2$. This implies that $z_1$ can be written as a convex combination of maximal states giving probability $1$ to the local effect $e_i$: $z_1 = \sum_k p_k s_k$, where $p_k \neq 0$, $s_k \in \text{Ext}[\mathcal{S}_1]$ and $e_i(s_k)=1$ for all $k$. But since $z$ is deterministic with respect to all product MDEs, it follows that $z_1$ must be deterministic with respect to all MDEs of system $1$. That is, each extremal state $s_k$ (with the non-zero weight $p_k$) in the decomposition of $z_1$ must give probability one to \textit{exactly the same effect} in each MDE in $\mathcal{E}_1$. In other words, the states $s_k$ have the same (deterministic) probability distributions over all MDEs of system $1$. However, since MDEs form an informationally-complete set of measurements, it must follow that the $s_k$ states are \textit{the same state}, $s=s_k$ for all $k$. Hence $z_1 = s$ is a extremal state, in contradiction with the assumption that $z$ is entangled. We conclude that $z$ cannot be deterministic with respect to all product MDEs, hence it is a superposition of product states.
\end{proof}

\subsection{Singling out the Quantum Tensor Product: Proof of Theorem~\ref{Thm::QuantumTensorProduct}} \label{Appendix: quantum tensor product and superposition}

Some supporting lemmas.

\begin{lemma}
    If $\Pi_{1,2} = \ket{\phi}\rangle_{1,2}$ is a rank-1 bipartite projector and $\mathbb{I}_{1,2}$ denotes the identity matrix, then $\mathbb{I}_{1,2} - \Pi_{1,2}$ is separable. 
    \label{Lemma::separabilitu_comp_proj}
\end{lemma}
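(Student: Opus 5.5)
The plan is to prove the claim by an explicit separable decomposition. Recall that ``separable'' here means a nonnegative combination of product positive operators. The decomposition is built in the Schmidt basis of $\ket{\phi}$. As a fallback, the statement also follows from the Gurvits--Barnum ball: any unnormalised operator $X$ with $\|X-\mathbb{I}\|_2\le 1$ is separable, and here $\|\Pi_{1,2}\|_2=1$. I would prefer a self-contained construction, however.

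First, write the Schmidt decomposition $\ket{\phi}=\sum_{i=1}^{k}\lambda_i\ket{i}\ket{i}$ with $\lambda_i>0$ and $\sum_i\lambda_i^2=1$, after local unitaries. Local unitaries preserve separability, so we may work in this basis. Split
\[
\mathbb{I}_{1,2}-\Pi_{1,2}=\sum_{(i,j)\notin\{(l,l)\}_{l\le k}}\ketbra{ij}{ij}+\Big(\sum_{i\le k}\ketbra{ii}{ii}-\Pi_{1,2}\Big).
\]
The first sum is manifestly a sum of product projectors. The second bracket is supported on $\mathrm{span}\{\ket{ii}\}$. It has diagonal entries $1-\lambda_i^2=\sum_{j\ne i}\lambda_j^2$ and off-diagonal entries $-\lambda_i\lambda_j$ on $\ketbra{ii}{jj}$. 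By itself it is not obviously separable, so it must borrow weight from the $\ketbra{ij}{ij}$ terms.

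The key step is to do this pair by pair. For each $i<j\le k$, take the product vector $(a\ket{i}+b\,e^{\mathrm{i}\alpha}\ket{j})\otimes(a\ket{i}+b\,e^{\mathrm{i}\beta}\ket{j})$ with $a^2=\lambda_j$ and $b^2=\lambda_i$. Average its projector over the phases $\alpha,\beta$ subject to $\alpha+\beta=\pi$ fixed (i.e.\ over $\alpha-\beta$ uniformly). This kills all coherences except $\ketbra{ii}{jj}$ and its adjoint. The averaged operator is therefore
\[
\lambda_j^2\ketbra{ii}{ii}+\lambda_i^2\ketbra{jj}{jj}+\lambda_i\lambda_j\big(\ketbra{ij}{ij}+\ketbra{ji}{ji}\big)-\lambda_i\lambda_j\big(\ketbra{ii}{jj}+\ketbra{jj}{ii}\big),
\]
and it is separable, being a mixture of product projectors. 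Summing over all pairs $i<j$ gives diagonal weight $\sum_{j\ne i}\lambda_j^2=1-\lambda_i^2$ on each $\ketbra{ii}{ii}$ and off-diagonal entries $-\lambda_i\lambda_j$. This reproduces the bracket exactly, at the cost of $\lambda_i\lambda_j$ of weight on each $\ketbra{ij}{ij}$ with $i\ne j\le k$. Since $\lambda_i\lambda_j\le 1$, the leftover $(1-\lambda_i\lambda_j)\ketbra{ij}{ij}$ is a nonnegative multiple of a product projector, and the remaining $\ketbra{ij}{ij}$ terms (indices beyond $k$) are untouched.

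Collecting all pieces expresses $\mathbb{I}_{1,2}-\Pi_{1,2}$ as a separable operator. The only delicate point is the choice $a^2=\lambda_j$, $b^2=\lambda_i$: the ``crossed'' assignment is what makes the diagonal weights sum to $1-\lambda_i^2$. Once that bookkeeping is checked, the rest is routine. The case $k=1$ (product $\ket{\phi}$) is trivial, since then $\mathbb{I}-\Pi$ is diagonal in a product basis.
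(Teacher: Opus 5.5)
Your proof is correct, and it takes a different route from the paper. The paper's proof is a one-line appeal to Theorem~1 of Gurvits and Barnum \cite{PhysRevA.66.062311}: every $\mathbb{I}_{1,2}+\Delta_{1,2}$ with $\Delta_{1,2}$ Hermitian and $\|\Delta_{1,2}\|_2\le 1$ is separable. The paper applies it with $\Delta_{1,2}=-\Pi_{1,2}$, which is exactly your fallback.

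You instead build the separable decomposition by hand. In the Schmidt basis the only non-product part is the block $\sum_{i\le k}\ketbra{ii}{ii}-\Pi_{1,2}$. Your phase-averaged product vectors with the crossed amplitudes $a^2=\lambda_j$, $b^2=\lambda_i$ reproduce this block exactly. After averaging, the only surviving coherence is $c_{ii}\bar c_{jj}=-\lambda_i\lambda_j$, and the diagonal weights on $\ketbra{ii}{ii}$ sum to $1-\lambda_i^2$. The cost is $\lambda_i\lambda_j\le 1/2$ on each $\ketbra{ij}{ij}$, which the identity covers. The total trace also comes out correctly as $d_1d_2-1$.

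One small improvement: the continuous phase average can be replaced by three equally spaced values of $\alpha$. Only the harmonics $e^{\pm\mathrm{i}\alpha}$ and $e^{\pm 2\mathrm{i}\alpha}$ need to cancel, so you get a finite sum of product positive operators and never need closedness of the separable cone.

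The two approaches buy different things:
\begin{itemize}
\item The cited theorem is more general, covering any Hermitian perturbation of Hilbert--Schmidt norm at most one, but it is an external black box. Also, the paper's phrase ``any projector'' is only right in rank one, since a rank-$r$ projector has $\|\Pi\|_2=\sqrt{r}$.
\item Your construction is elementary and shows exactly where rank one enters (a single Schmidt vector). It also directly exhibits $\mathbb{I}_{1,2}-\Pi_{1,2}$ as a nonnegative combination of product positive operators. That is precisely what Lemma~\ref{lemma::tracebound} needs to get $\tr[\tilde\rho_{1,2}(\mathbb{I}_{1,2}-\Pi_{1,2})]\ge 0$ for POPT $\tilde\rho_{1,2}$.
\end{itemize}
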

\begin{proof}
    In~\cite[Theorem~1]{PhysRevA.66.062311} the authors show that given any Hermitian matrix $\Delta_{1,2}$ with $\|\Delta_{1,2}\|_2 \leq 1$, the matrix $\mathbb{I}_{1,2} + \Delta_{1,2}$ is separable, where $\|\cdot\|_2$ is the 2-norm. Now, given any projector $\Pi_{1,2}$, $\|-\Pi_{1,2}\|_2 = |-1|\|\Pi_{1,2}\|_2 \leq 1$, and therefore setting $\Delta_{1,2} = -\Pi_{1,2}$ concludes the claim.
\end{proof}

\begin{lemma}
    Let $\tilde{\rho}_{1,2}$ be a POPT matrix with $\tr[\tilde{\rho}_{1,2}]=1$ and let $\sigma \succeq 0$. Then $\tr[\tilde{\rho}_{1,2}\sigma] \leq \tr[\sigma]$.
\label{lemma::tracebound}
\end{lemma}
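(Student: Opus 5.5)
The claim is that for a POPT (positive on product tensors) matrix $\tilde{\rho}_{1,2}$ with unit trace and any positive semidefinite $\sigma$, we have $\tr[\tilde{\rho}_{1,2}\sigma]\leq\tr[\sigma]$. The plan is to rewrite $\tr[\sigma]$ as $\tr[\tilde\rho_{1,2}\,\tr[\sigma]\,\mathbb{I}_{1,2}]$, which uses $\tr[\tilde\rho_{1,2}]=1$. The inequality then becomes
\begin{equation*}
\tr\big[\tilde\rho_{1,2}\,(\tr[\sigma]\,\mathbb{I}_{1,2}-\sigma)\big]\geq 0 .
\end{equation*}
Since $\tilde\rho_{1,2}$ is POPT, it has non-negative trace against every separable positive operator, that is, against every operator in the cone generated by product projectors $P_1\otimes P_2$. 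So it suffices to show that $\tr[\sigma]\,\mathbb{I}_{1,2}-\sigma$ is separable.

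First, dispose of $\sigma=0$, which is trivial. Otherwise normalise: set $\hat\sigma=\sigma/\tr[\sigma]$. Then $\hat\sigma\succeq 0$ with unit trace, so its operator norm is at most $1$, and so is its Hilbert--Schmidt (2-)norm, since $\tr[\hat\sigma^2]\leq(\tr\hat\sigma)^2=1$. Now apply the result from~\cite[Theorem~1]{PhysRevA.66.062311} already used in Lemma~\ref{Lemma::separabilitu_comp_proj}: any Hermitian $\Delta_{1,2}$ with $\|\Delta_{1,2}\|_2\leq 1$ makes $\mathbb{I}_{1,2}+\Delta_{1,2}$ separable. Taking $\Delta_{1,2}=-\hat\sigma$ shows that $\mathbb{I}_{1,2}-\hat\sigma$ is separable. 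Multiplying by $\tr[\sigma]>0$ preserves separability, so $\tr[\sigma]\,\mathbb{I}_{1,2}-\sigma$ is separable.

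An alternative route avoids the norm theorem: decompose $\sigma=\sum_k\lambda_k\ketbra{v_k}{v_k}$ with $\lambda_k\geq0$ and $\sum_k\lambda_k=\tr[\sigma]$. Then
\begin{equation*}
\tr[\sigma]\,\mathbb{I}_{1,2}-\sigma=\sum_k\lambda_k\big(\mathbb{I}_{1,2}-\ketbra{v_k}{v_k}\big),
\end{equation*}
which is a non-negative combination of operators that are separable by Lemma~\ref{Lemma::separabilitu_comp_proj} directly. This version is cleaner and is the one I would write.

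To conclude, write $\tr[\sigma]\,\mathbb{I}_{1,2}-\sigma=\sum_m q_m\,a_m\otimes b_m$ with $q_m\geq0$ and $a_m,b_m\succeq0$. The POPT property gives $\tr[\tilde\rho_{1,2}\,(a_m\otimes b_m)]\geq0$ for each $m$, and summing yields the claim. The only subtle point is the convention for POPT, namely positivity on all product positive operators versus only on product pure states. These are equivalent by spectral decomposition of each factor, so I would state that explicitly; beyond that I expect no real obstacle.
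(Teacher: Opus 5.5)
Your proof is correct, and the route you say you would write (spectrally decompose $\sigma$, apply Lemma~\ref{Lemma::separabilitu_comp_proj} to each $\mathbb{I}_{1,2}-\ketbra{v_k}{v_k}$, then use positivity of $\tilde\rho_{1,2}$ on separable operators) is essentially the paper's argument; the paper bounds $\tr[\tilde\rho_{1,2}\ketbra{\phi_k}{\phi_k}]\leq 1$ term by term and then sums with the weights $\alpha_k$. Your first route is also valid and slightly shorter, since it skips the spectral decomposition: it applies the separable-ball theorem~\cite{PhysRevA.66.062311} directly to $\hat\sigma=\sigma/\tr[\sigma]$, using $\|\hat\sigma\|_2\leq 1$.
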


\begin{proof}
    Since $\sigma \succeq 0 $ there exists a set $\{\ket{\phi_k}\rangle\}_k$, such that
\begin{equation}
    \sigma = \sum_{k} \alpha_k \ket{\phi_k}\rangle, \quad \alpha \geq 0 \ \forall \ k.    
\end{equation}
Since for every $k$, $\mathbb{I}-\ket{\phi_k}\rangle$ is separable following Lemma~\ref{Lemma::separabilitu_comp_proj}, $\tr[\tilde{\rho}_{1,2} \ket{\phi_k}\rangle ] \geq 0$. Therefore,
\begin{equation}
\begin{split}
    \tr[\tilde{\rho}_{1,2}(\mathbb{I}-\ket{\phi_k}\rangle)] &= \tr[\tilde{\rho}_{1,2}] - \tr[\tilde{\rho}_{1,2} \ket{\phi_k}\rangle]\\
    &=1 - \tr[\tilde{\rho}_{1,2} \ket{\phi_k} \rangle] \\
    &\implies \tr[\tilde{\rho}_{1,2} \ket{\phi_k}\rangle] \leq 1. \\
\end{split}   
\label{equpperb}
\end{equation}
Therefore, 
\begin{equation}
    \begin{split}
       \tr[\tilde{\rho}_{1,2}\sigma] = \tr\left[\tilde{\rho}_{1,2} \sum_{k} \alpha_k \ket{\phi_k} \rangle \right] = \sum_{k} \alpha_{k} \tr[\tilde{\rho}_{1,2} \ket{\phi_k}\rangle] \quad \mathrel{\stackrel{\makebox[0pt]{\mbox{\normalfont \small Eq~\ref{equpperb}}}}{\leq}} \quad \sum_{k} \alpha_{k} = \tr[\sigma].
    \end{split}
\end{equation}
\end{proof}

\begin{lemma}
    Every rank-1 unit-trace positive semi-definite matrix $\ket{\psi}\rangle_{1,2}$ is an element of $ \mathrm{Ext}[\mathcal{S}_{\mathcal{Q}_1} \boxtimes \mathcal{S}_{\mathcal{Q}_2}]$, where $\mathcal{S_Q}_{1,2}  \subseteq   \mathcal{S}_{\mathcal{Q}_1} \boxtimes \mathcal{S}_{\mathcal{Q}_2} \subseteq \mathcal{S}_{\mathcal{Q}_1} \otimes_{\text{max}} \mathcal{S}_{\mathcal{Q}_2}$.
    \label{lemma::rank1extremal}
\end{lemma}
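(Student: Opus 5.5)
We argue by contradiction, using the previous two lemmas. Let $\Psi \coloneqq \ket{\psi}\rangle_{1,2}$ be a rank-1 unit-trace positive semi-definite matrix. Since $\mathcal{S_Q}_{1,2} \subseteq \mathcal{S}_{\mathcal{Q}_1} \boxtimes \mathcal{S}_{\mathcal{Q}_2}$, $\Psi$ belongs to the composite state space, and only its extremality has to be shown. Suppose it is not extremal, so that
\begin{equation*}
\Psi = \lambda \tilde{\rho} + (1-\lambda)\tilde{\sigma}, \qquad \lambda \in (0,1),\quad \tilde{\rho},\tilde{\sigma} \in \mathcal{S}_{\mathcal{Q}_1} \boxtimes \mathcal{S}_{\mathcal{Q}_2} \subseteq \mathcal{S}_{\mathcal{Q}_1} \xtens \mathcal{S}_{\mathcal{Q}_2}.
\end{equation*}
Every element of the maximal tensor product is a unit-trace Hermitian matrix that is positive on product (hence separable) operators, i.e.\ a POPT matrix. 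It therefore suffices to prove that $\Psi$ is extremal in the larger convex set of unit-trace POPT matrices; extremality in any intermediate composition then follows immediately.

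The key tool is the effect $\Psi$ itself together with its complement. By Lemma~\ref{Lemma::separabilitu_comp_proj}, $\mathbb{I}_{1,2}-\Psi$ is separable, so $\tr[\tilde{\rho}(\mathbb{I}-\Psi)]\geq 0$; equivalently, Lemma~\ref{lemma::tracebound} gives $\tr[\tilde{\rho}\Psi]\leq 1$, and likewise for $\tilde{\sigma}$. On the other hand,
\begin{equation*}
1 = \tr[\Psi\Psi] = \lambda \tr[\tilde{\rho}\Psi] + (1-\lambda)\tr[\tilde{\sigma}\Psi],
\end{equation*}
which forces $\tr[\tilde{\rho}\Psi]=\tr[\tilde{\sigma}\Psi]=1$.

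Next we show that any unit-trace POPT matrix $\tilde{\rho}$ with $\tr[\tilde{\rho}\Psi]=1$ must equal $\Psi$. For every rank-1 projector $\Phi=\ket{\phi}\rangle$, Lemma~\ref{lemma::tracebound} and the argument inside its proof give $0\leq \tr[\tilde{\rho}\Phi]\leq 1$. In particular $\langle\phi|\tilde{\rho}|\phi\rangle\geq 0$ for every $\ket{\phi}$, so $\tilde{\rho}\succeq 0$. This positivity step is where the separability of $\mathbb{I}-\Phi$ does the real work, and it is the crux of the proof.

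Once $\tilde{\rho}$ is known to be an ordinary density matrix with unit overlap with the pure state $\Psi$, we have $\tilde{\rho}=\Psi$. Indeed, $\tr[\tilde{\rho}(\mathbb{I}-\Psi)]=0$ with both factors positive semi-definite gives $\tilde{\rho}(\mathbb{I}-\Psi)=0$. Hence $\tilde{\rho}$ is supported on $\ket{\psi}$ and equals $\Psi$ by normalisation. The same holds for $\tilde{\sigma}$, contradicting the assumed nontrivial decomposition.

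The main obstacle is the positivity step. One must check that Lemma~\ref{Lemma::separabilitu_comp_proj} indeed yields $\tr[\tilde{\rho}\Phi]\geq 0$ for every rank-1 $\Phi$, not just the upper bound. Note that the proof of Lemma~\ref{lemma::tracebound} asserts exactly this lower bound, invoking Lemma~\ref{Lemma::separabilitu_comp_proj}.

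The lemma as cited only says that $\mathbb{I}+\Delta$ is separable for $\|\Delta\|_2\leq 1$. Taking $\Delta=\Phi-\mathbb{I}$ is not obviously admissible, since it need not satisfy the norm bound. If that step fails, the fallback is to argue the other way round: show directly that $\tilde{\rho}$ vanishes on the orthogonal complement of $\ket{\psi}$. For this we would use that $\tr[\tilde{\rho}(\mathbb{I}-\Psi)]=0$. Because $\mathbb{I}-\Psi$ is separable and $\tilde{\rho}$ is nonnegative on every product projector, each product projector appearing in a separable decomposition of $\mathbb{I}-\Psi$ must then receive zero weight under $\tilde{\rho}$. One would then check that these product vectors span the orthogonal complement of $\ket{\psi}$, and conclude from this that $\tilde{\rho}=\Psi$.
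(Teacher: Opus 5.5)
Your reduction to extremality among unit-trace POPT matrices is exactly the paper's reduction. Deducing $\tr[\tilde\rho\Psi]=\tr[\tilde\sigma\Psi]=1$ from the separability of $\mathbb{I}-\Psi$ is also fine. The main route then fails at ``$0\leq\tr[\tilde\rho\Phi]$ for every rank-1 $\Phi$, so $\tilde\rho\succeq 0$''. Your suspicion there is justified: the claim is false, not merely unproven. That step never uses $\tr[\tilde\rho\Psi]=1$, so it would make every unit-trace POPT matrix positive semidefinite, i.e.\ $\mathcal{S}_{\mathcal{Q}_1}\xtens\mathcal{S}_{\mathcal{Q}_2}=\mathcal{S_Q}_{1,2}$. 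But the partial transpose of $\ketbra{\phi_+}{\phi_+}$ is half the swap operator $F$. It is POPT because $\langle ab|F|ab\rangle=|\langle a|b\rangle|^2$, yet it has eigenvalue $-1/2$. The proof of Theorem~\ref{Thm::QuantumTensorProduct} itself relies on states with $\tr[\tilde\rho\,\ketbra{\psi_{\tilde\rho}}{\psi_{\tilde\rho}}]<0$. The ``$\geq 0$'' asserted inside the proof of Lemma~\ref{lemma::tracebound} is a non sequitur: separability of $\mathbb{I}-\Phi$ yields only $\tr[\tilde\rho\Phi]\leq 1$. That lemma's conclusion still holds, since $\alpha_k\geq 0$, and it amounts to $\tilde\rho\preceq\mathbb{I}$. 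Combined with $\langle\psi|\tilde\rho|\psi\rangle=1$, this gives $\tilde\rho\ket{\psi}=\ket{\psi}$. So $\tilde\rho=\Psi+X$ with $X$ traceless and supported on $\ket{\psi}^\perp$, and what remains is to show $X=0$.

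Your fallback does not achieve this. The product vectors $v_k$ in a separable decomposition of $\mathbb{I}-\Psi$ automatically span $\ket{\psi}^\perp$, because they span its range, so that check is not the issue. The problem is the final inference. For an indefinite Hermitian operator, $\langle v_k|X|v_k\rangle=0$ on a spanning set does not force $X$ to vanish on the span; think of $\sigma_x$, whose diagonal vanishes in the computational basis. What is missing is an argument that uses block-positivity beyond these diagonal values. This is the content of Theorem~14 of Hansen et al., which the paper simply cites before making the same reduction you make.

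A self-contained proof can use both $\tilde\rho$ and $\tilde\sigma$ blockwise. For each $x\in\mathbb{C}^{d_1}$, the compression $(\langle x|\otimes\mathbb{I})\,\tilde\rho\,(\ket{x}\otimes\mathbb{I})$ is positive semidefinite; this is precisely the POPT condition. The compressions of $\tilde\rho$ and $\tilde\sigma$ average to the rank-one operator $\ketbra{\psi_x}{\psi_x}$, where $\ket{\psi_x}\coloneqq(\langle x|\otimes\mathbb{I})\ket{\psi}$. Hence the compression of $\tilde\rho$ is $\mu(x)\ketbra{\psi_x}{\psi_x}$ with $0\leq\mu(x)\leq 1/\lambda$. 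Take $\ket{\psi_x},\ket{\psi_{x'}}$ linearly independent and expand the compression at $x+tx'$ for real $t$. Comparing the coefficients of $\ketbra{\psi_x}{\psi_x}$ and $\ketbra{\psi_{x'}}{\psi_{x'}}$ and using boundedness of $\mu$ gives $\mu(x)=\mu(x')$. So $\mu$ is constant when $\psi$ has Schmidt rank at least two; the product case is easier. Polarisation then gives $\tilde\rho=\mu\Psi$, and the trace gives $\mu=1$.
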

\begin{proof}
    In~\cite[Theorem~14]{hansen2015extremal}, the authors show that $\ket{\psi}\rangle_{1,2}  \in \mathrm{Ext}\left[\mathcal{S}_{\mathcal{Q}_1} \otimes_{\text{max}} \mathcal{S}_{\mathcal{Q}_2}\right]$. Now assume that for some composition $\boxtimes$, of quantum subsystems such that $\mathcal{S_Q}_{1,2}  \subseteq   \mathcal{S}_{\mathcal{Q}_1} \boxtimes \mathcal{S}_{\mathcal{Q}_2} \subseteq \mathcal{S}_{\mathcal{Q}_1} \otimes_{\text{max}} \mathcal{S}_{\mathcal{Q}_2}$, there exists $\tau,\tau' \in \mathcal{S}_{\mathcal{Q}_1} \boxtimes \mathcal{S}_{\mathcal{Q}_2}$ such that for some $\lambda \in (0,1)$
    \begin{equation}
        \ket{\psi}\rangle_{1,2} = \lambda \tau + (1-\lambda) \tau'.
    \end{equation}
    Since, $\mathcal{S}_{\mathcal{Q}_1} \boxtimes \mathcal{S}_{\mathcal{Q}_2} \subseteq \mathcal{S}_{\mathcal{Q}_1} \otimes_{\text{max}} \mathcal{S}_{\mathcal{Q}_2}$, $\tau,\tau' \in \mathcal{S}_{\mathcal{Q}_1} \otimes_{\text{max}} \mathcal{S}_{\mathcal{Q}_2}$. And since $\ket{\psi}\rangle_{1,2}  \in \mathrm{Ext}\left[\mathcal{S}_{\mathcal{Q}_1} \otimes_{\text{max}} \mathcal{S}_{\mathcal{Q}_2}\right]$, this gives $\tau=\tau'=\ket{\psi}\rangle_{1,2}$.
\end{proof}

\begin{lemma}
    If $\Pi_{1,2} = \ket{\psi}\rangle_{1,2}$ is an element of $\mathcal{E}_{\mathcal{S}_{\mathcal{Q}_1} \boxtimes \mathcal{S}_{\mathcal{Q}_2}}$, then $\Pi_{1,2} = \ket{\psi}\rangle_{1,2} \in \mathrm{Ext}\left[\mathcal{E}_{\mathcal{S}_{\mathcal{Q}_1} \boxtimes \mathcal{S}_{\mathcal{Q}_2}}\right]$. 
    \label{projector_effect_extremal}
\end{lemma}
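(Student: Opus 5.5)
Suppose $\Pi_{1,2}=\lambda E+(1-\lambda)E'$ with $E,E'\in\mathcal{E}_{\mathcal{S}_{\mathcal{Q}_1} \boxtimes \mathcal{S}_{\mathcal{Q}_2}}$ and $\lambda\in(0,1)$. The goal is to show $E=E'=\Pi_{1,2}$. The plan is to test this decomposition against quantum states, which lie in the composite state space because $\mathcal{S_Q}_{1,2}\subseteq \mathcal{S}_{\mathcal{Q}_1} \boxtimes \mathcal{S}_{\mathcal{Q}_2}$. Every effect of the composite must therefore take values in $[0,1]$ on all density matrices of $\mathbb{H}(\mathbb{C}^{d_1d_2})$. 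Since these effects act through the Hilbert--Schmidt inner product, this forces $0\preceq E,E'\preceq \mathbb{I}_{1,2}$. In other words, $\mathcal{E}_{\mathcal{S}_{\mathcal{Q}_1} \boxtimes \mathcal{S}_{\mathcal{Q}_2}}$ is contained in the quantum effect space of the composite, because the effect space shrinks as the state space grows.

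With this in hand, the argument reduces to the standard fact that rank-1 projectors are extremal in the quantum effect set $\{0\preceq E\preceq \mathbb{I}\}$, which I would reprove briefly as follows.
\begin{enumerate}
\item Take any unit vector $\ket{\chi}\perp\ket{\psi}$. Then $0=\langle\chi|\Pi|\chi\rangle=\lambda\langle\chi|E|\chi\rangle+(1-\lambda)\langle\chi|E'|\chi\rangle$. Both terms are nonnegative by positivity, so each vanishes.
\item Since $E\succeq 0$, the vanishing of $\langle\chi|E|\chi\rangle$ gives $E\ket{\chi}=0$ for all $\ket{\chi}\perp\ket{\psi}$. Hence $E=a\ketbra{\psi}{\psi}$, and likewise $E'=a'\ketbra{\psi}{\psi}$.
\item Evaluating on the state $\ket{\psi}\rangle$ gives $\lambda a+(1-\lambda)a'=1$, while $E,E'\preceq\mathbb{I}$ gives $a,a'\le 1$. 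Together these force $a=a'=1$, so $E=E'=\Pi_{1,2}$.
\end{enumerate}

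The only step requiring care is the containment of the composite effect space in the quantum one, since nothing else about the composition rule $\boxtimes$ is used. For it, I evaluate an arbitrary $E$ on pure quantum states $\ket{\chi}\rangle$, which belong to $\mathcal{S}_{\mathcal{Q}_1} \boxtimes \mathcal{S}_{\mathcal{Q}_2}$ (Lemma~\ref{lemma::rank1extremal} even confirms that they remain extremal there). This gives $0\le\langle\chi|E|\chi\rangle\le 1$ for every unit vector $\ket{\chi}$, which is exactly $0\preceq E\preceq\mathbb{I}_{1,2}$ for Hermitian $E$.

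I also identify $\mathbb{V}_1\otimes\mathbb{V}_2$ with $\mathbb{H}(\mathbb{C}^{d_1d_2})$, which the paper implicitly does, so that effects are Hermitian matrices. The unit effect is $\mathbb{I}_{1,2}$, as in the quantum case.
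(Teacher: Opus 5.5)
Your proposal is correct and follows the same route as the paper, whose one-line proof rests on the inclusion $\mathcal{E}_{\mathcal{S}_{\mathcal{Q}_1} \boxtimes \mathcal{S}_{\mathcal{Q}_2}} \subseteq \mathcal{E}_{\mathcal{S_Q}_{1,2}}$ together with the extremality of rank-1 projectors in the quantum effect set. You additionally justify that inclusion by testing effects on pure quantum states, and you spell out the rank-1 extremality argument that the paper leaves implicit.
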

\begin{proof}
    Follows from the inclusion $\mathcal{E}_{\mathcal{S}_{\mathcal{Q}_{1,2}}} \supseteq \mathcal{E}_{\mathcal{S}_{\mathcal{Q}_1} \boxtimes \mathcal{S}_{\mathcal{Q}_2}}$.
\end{proof}

\begin{lemma}
    If the operational dimension of $(\mathcal{S}_{\mathcal{Q}_1},\mathcal{E}_{\mathcal{S}_{\mathcal{Q}_1}})$ and $(\mathcal{S}_{\mathcal{Q}_2},\mathcal{E}_{\mathcal{S}_{\mathcal{Q}_1}})$ are $d_1$ and $d_2$ respectively, then the operational dimension of $(\mathcal{S}_{\mathcal{Q}_1} \boxtimes \mathcal{S}_{\mathcal{Q}_2},\mathcal{E}_{\mathcal{S}_{\mathcal{Q}_1} \boxtimes \mathcal{S}_{\mathcal{Q}_2}})$, such that $\mathcal{S}_{\mathcal{Q}_1} \boxtimes \mathcal{S}_{\mathcal{Q}_2} \subseteq \mathcal{S}_{\mathcal{Q}_1} \otimes_{\text{max}} \mathcal{S}_{\mathcal{Q}_2}$, is $d_1d_2$.
    \label{lemma::multiplicityquantum}
\end{lemma}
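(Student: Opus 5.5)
We need to prove that any composition $\mathcal{S}_{\mathcal{Q}_1} \boxtimes \mathcal{S}_{\mathcal{Q}_2}$ with $\mathcal{S_Q}_{1,2} \subseteq \mathcal{S}_{\mathcal{Q}_1} \boxtimes \mathcal{S}_{\mathcal{Q}_2} \subseteq \mathcal{S}_{\mathcal{Q}_1} \otimes_{\text{max}} \mathcal{S}_{\mathcal{Q}_2}$, with the effect space $\mathcal{E}_{\mathcal{S}_{\mathcal{Q}_1} \boxtimes \mathcal{S}_{\mathcal{Q}_2}}$ given by the no-restriction hypothesis, has operational dimension exactly $d_1d_2$. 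The plan is to prove two inequalities.

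For the lower bound I would exhibit $d_1d_2$ perfectly discriminable extremal states. Take orthonormal bases $\{\ket{i}\}$ and $\{\ket{j}\}$ and the product states $\ket{i}\rangle\otimes\ket{j}\rangle$. By Lemma~\ref{lemma::rank1extremal} these are extremal in the composite. The product effects $\ket{i}\rangle\otimes\ket{j}\rangle$ lie in the minimal tensor product of the local effect spaces, so they are valid on every state of $\mathcal{S}_{\mathcal{Q}_1}\otimes_{\max}\mathcal{S}_{\mathcal{Q}_2}$, and hence on the smaller composite. They sum to $\mathbb{I}_{1,2}=u$ and satisfy $\langle e_{ij}, s_{kl}\rangle=\delta_{ik}\delta_{jl}$. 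Since they are rank-1 projectors, Lemma~\ref{projector_effect_extremal} makes them extremal effects. The operational dimension is therefore at least $d_1d_2$.

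For the upper bound, suppose $\{s_k\}_{k=1}^{N}$ is a PD set in the composite, discriminated by a measurement $\{E_k\}$ with $\langle E_k,s_l\rangle=\delta_{kl}$. The key step is to move the problem into a setting where the ordinary dimension count applies. First, because $\mathcal{S}_{\mathcal{Q}_1} \boxtimes \mathcal{S}_{\mathcal{Q}_2}$ contains $\mathcal{S_Q}_{1,2}$, every composite effect $E_k$ is in particular nonnegative on all quantum states. Each $E_k$ is therefore a positive semidefinite matrix, and $\sum_k E_k=\mathbb{I}$ gives $\sum_k\tr[E_k]=d_1d_2$. Second, each $s_k$ is a POPT, unit-trace matrix in $\otimes_{\max}$. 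Lemma~\ref{lemma::tracebound} then gives $1=\tr[s_kE_k]\leq \tr[E_k]$. Summing over $k$ yields $N\leq\sum_k\tr[E_k]=d_1d_2$.

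The main obstacle I anticipate is justifying that composite effects are PSD, since a priori they are only required to be positive on $\mathcal{S}_{\mathcal{Q}_1} \boxtimes \mathcal{S}_{\mathcal{Q}_2}$. This is exactly where the inclusion $\mathcal{S_Q}_{1,2}\subseteq\mathcal{S}_{\mathcal{Q}_1} \boxtimes \mathcal{S}_{\mathcal{Q}_2}$ is used: positivity on all density matrices is equivalent to positive semidefiniteness. I would also double-check that Lemma~\ref{lemma::tracebound} applies to states of $\otimes_{\max}$, which are precisely the unit-trace POPT matrices. Combining the two bounds gives operational dimension $d_1d_2$.
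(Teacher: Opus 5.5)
Your proof is correct and follows the paper's own argument: the lower bound is the same product-basis construction, with extremality from Lemma~\ref{lemma::rank1extremal} and Lemma~\ref{projector_effect_extremal}, and the upper bound is the same trace count $N\leq\sum_k\tr[E_k]=\tr[\mathbb{I}_{1,2}]=d_1d_2$. The only difference is that you justify $\tr[E_k]\geq 1$, which the paper asserts tersely here and argues the same way in Lemma~\ref{lemma::unit_traceof_effects}: effects are positive semidefinite because $\mathcal{S_Q}_{1,2}\subseteq\mathcal{S}_{\mathcal{Q}_1}\boxtimes\mathcal{S}_{\mathcal{Q}_2}$, and then Lemma~\ref{lemma::tracebound} applies; you are also right to state explicitly this lower inclusion, which the lemma as written omits but both proofs rely on.
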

\begin{proof}
    Let $d$ be the operational dimension of $(\mathcal{S}_{\mathcal{Q}_1} \boxtimes \mathcal{S}_{\mathcal{Q}_2},\mathcal{E}_{\mathcal{S}_{\mathcal{Q}_1} \boxtimes \mathcal{S}_{\mathcal{Q}_2}})$. Therefore there exists a set of extremal states $\{s_i\}_{i=1}^d$ and a measurement $\{e_i\}_{i=1}^d$, such that $\tr[e_is_j] = \delta_{i,j}$. Since for every $i$, $\tr[e_is_i]=1$ and $\tr[s_i]=1$, $\tr[e_i] \geq1$. Therefore,
    \begin{equation}
    \begin{split}
        d &\leq \sum_{i=1}^d\tr[e_i] = \tr\left[ \sum_{i=1}^d e_i\right] =\tr[\mathbb{I}_{1,2}] = d_1d_2.  \\       
    \end{split}        
    \end{equation}
One can show that this upper bound is tight by the following example. Consider the MPD set and the corresponding MDE measurement both characterised by, $\{\ket{\phi_k}\rangle_{k=1}^{d_1}\}$,  from $(\mathcal{S}_{\mathcal{Q}_1},\mathcal{E}_{\mathcal{S}_{\mathcal{Q}_1}})$ and similarly $\{\ket{\phi'_l}\rangle_{l=1}^{d_2}\}$ from $(\mathcal{S}_{\mathcal{Q}_2},\mathcal{E}_{\mathcal{S}_{\mathcal{Q}_2}})$. From this construct the set $\{\ket{\phi_k}\rangle \otimes\ket{\phi'_l} \}_{k,l=1}^{d_1,d_2}$. 
  This is a valid measurement because all elements are products. In addition, this is an extremal measurements since all effects are extremal, following Lemma~\ref{projector_effect_extremal}. This also represents a valid collection of extremal states, following Lemma~\ref{lemma::rank1extremal}. And finally the cardinality of this set is $d_1d_2$, which is the upper bound.
\end{proof}

\begin{lemma}
    Let $\{e_i\}_{i=1}^{d_1d_2}$ be an MDE measurement of $(\mathcal{S}_{\mathcal{Q}_1} \boxtimes \mathcal{S}_{\mathcal{Q}_2},\mathcal{E}_{\mathcal{S}_{\mathcal{Q}_1} \boxtimes \mathcal{S}_{\mathcal{Q}_2}})$ where $\mathcal{S_Q}_{1,2}  \subseteq   \mathcal{S}_{\mathcal{Q}_1} \boxtimes \mathcal{S}_{\mathcal{Q}_2} \subseteq \mathcal{S}_{\mathcal{Q}_1} \otimes_{\text{max}} \mathcal{S}_{\mathcal{Q}_2}$. Then $\tr[e_i]=1$ for all $i \in \{1,\dots,d_1d_2\}$.
    \label{lemma::unit_traceof_effects}
\end{lemma}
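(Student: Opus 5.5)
Let $\{e_i\}_{i=1}^{d_1d_2}$ be an MDE of the composite and let $\{s_i\}_{i=1}^{d_1d_2}$ be the MPD set it discriminates, so that $\tr[e_i s_j]=\delta_{i,j}$. By Lemma~\ref{lemma::multiplicityquantum}, the MPD set has exactly $d_1d_2$ elements. The plan is to sandwich $\tr[e_i]$ between a lower bound valid for each $i$ and a global upper bound that forces equality.

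\textbf{Step 1 (global identity).} Because the $e_i$ form a measurement, $\sum_i e_i=\mathbb{I}_{1,2}$. Taking the trace gives
\[
\sum_i \tr[e_i]=\tr[\mathbb{I}_{1,2}]=d_1d_2 .
\]
Hence it suffices to show $\tr[e_i]\geq 1$ for every $i$. Equality in the sum then forces equality in each term.

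\textbf{Step 2 (lower bound).} The lower bound is where the main obstacle lies, and it is where the range of the composition enters. The states $s_i$ may lie outside $\mathcal{S_Q}_{1,2}$, i.e.\ they may be POPT matrices rather than density matrices. Likewise the effects $e_i\in\mathcal{E}_{\mathcal{S}_{\mathcal{Q}_1}\boxtimes\mathcal{S}_{\mathcal{Q}_2}}$ need not be positive semidefinite a priori. So the naive inference ``$\tr[e_is_i]=1$ and $\tr[s_i]=1$ imply $\tr[e_i]\geq1$'', used in the proof of Lemma~\ref{lemma::multiplicityquantum}, needs justification.

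I would use the inclusion $\mathcal{S_Q}_{1,2}\subseteq\mathcal{S}_{\mathcal{Q}_1}\boxtimes\mathcal{S}_{\mathcal{Q}_2}$. Every effect of the larger composition must give probabilities in $[0,1]$ on all density matrices, so $0\preceq e_i\preceq\mathbb{I}_{1,2}$, and each $e_i$ is a genuine positive semidefinite operator.

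Now apply Lemma~\ref{lemma::tracebound} with $\tilde\rho_{1,2}=s_i$ and $\sigma=e_i\succeq0$. The state $s_i$ is a unit-trace POPT matrix because $s_i\in\mathcal{S}_{\mathcal{Q}_1}\otimes_{\max}\mathcal{S}_{\mathcal{Q}_2}$. This yields
\[
1=\tr[s_ie_i]\leq\tr[e_i].
\]

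\textbf{Step 3 (conclusion).} Combining this with Step 1, $d_1d_2=\sum_i\tr[e_i]\geq d_1d_2$. Since each term is at least $1$, every term must equal exactly $1$, so $\tr[e_i]=1$ for all $i$.

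Lemma~\ref{lemma::tracebound} relies on Lemma~\ref{Lemma::separabilitu_comp_proj}, i.e.\ on the fact that a POPT matrix gives nonnegative value to the separable operators $\mathbb{I}-\ket{\phi_k}\rangle$. I would check that this step is genuinely available, since it carries the whole lower bound. Extremality of the $e_i$ is not needed for this argument.
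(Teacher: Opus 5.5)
Your proof is correct and follows the paper's argument. You obtain the lower bound $\tr[e_i]\geq 1$ from Lemma~\ref{lemma::tracebound} with $\tilde\rho_{1,2}=s_i$, and then the trace of $\sum_i e_i=\mathbb{I}_{1,2}$ forces equality; your explicit check that $e_i\succeq 0$, which follows from $\mathcal{S_Q}_{1,2}\subseteq\mathcal{S}_{\mathcal{Q}_1}\boxtimes\mathcal{S}_{\mathcal{Q}_2}$, supplies the hypothesis $\sigma\succeq 0$ of that lemma, which the paper uses only implicitly.
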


\begin{proof}
  Let $\{s_i\}_{i=1}^{d_1d_2}$ be the corresponding MPD set. From Lemma~\ref{lemma::tracebound}, we have that for every $i \in \{1,\dots,d_1d_2\}$, $\tr[e_i\tilde{\rho}] \leq \tr[e_i]$ for every $\tilde{\rho} \in \mathcal{S}_{\mathcal{Q}_1} \boxtimes \mathcal{S}_{\mathcal{Q}_2}$. Additionally, we also have that $\tr[e_is_i]=1$. This implies $\tr[e_i] \geq 1$. Now since there are $d_1d_2$ effects in the measurement and $\tr[\sum_{i=1}^{d_1d_2}e_i] = d_1d_2$, we get $\tr[e_i]=1$ fro all $i \in \{1,\dots,d_1d_2\}$. 
\end{proof}

\QuantumTensorProduct*

\begin{proof}
    We know that quantum theory admits mutual superposition. Here we will show that no composition, $\mathcal{S}_1 \boxtimes \mathcal{S}_2$, such that $\mathcal{S_Q}_{1,2}  \subset   \mathcal{S}_{\mathcal{Q}_1} \boxtimes \mathcal{S}_{\mathcal{Q}_2} \subseteq \mathcal{S}_{\mathcal{Q}_1} \otimes_{\text{max}} \mathcal{S}_{\mathcal{Q}_2}$ will admit mutual superposition. Let $\tilde{\rho}_{12} \in  \mathcal{S}_{\mathcal{Q}_1} \boxtimes \mathcal{S}_{\mathcal{Q}_2} \setminus \mathcal{S_Q}_{1,2}$. Since $\tilde{\rho}$ is not positive semidefinite, it has at least one eigenvector $\ket{\psi_{\tilde{\rho}}}_{1,2}$, such that
    \begin{equation}
        \tr\left[ \tilde{{\rho}}  \ket{\psi_{\tilde{\rho}}}\rangle_{1,2} \right] < 0,
        \label{equation::witness}
    \end{equation}
and from the fact that $\tilde{\rho}$ is POPT, $\ket{\psi}\rangle$ must also be non-separable across the partition $1|2$. Since normalisation does not change the inequality above we abuse notation to denote $\ket{\psi_{\tilde{\rho}}}\rangle_{1,2}$ as its normalised state from here.  From Lemma~\ref{lemma::rank1extremal}, we have that $ \ket{\psi_{\tilde{\rho}}}\rangle_{1,2} \in \mathrm{Ext}[\mathcal{S}_{\mathcal{Q}_1} \boxtimes \mathcal{S}_{\mathcal{Q}_2}]$. Next, we show that the state $ \ket{\psi_{\tilde{\rho}}}\rangle_{1,2}$ is a superposition. To do this, consider the Schmidt decomposition 
\begin{equation}
    \ket{\psi_{\tilde{\rho}}}_{1,2} = \sum_{i,j=1}^{m,n} \sqrt{p_{i,j}} \ket{\alpha_i}_1 \otimes \ket{\beta_j}_2,
\end{equation}
where $p_{i,j} \in (0,1)$ for any $i,j \in \{(1,1), \dots,(m,n)\}$ and $\sum_{i,j} p_{i,j} =1$. Let $\{\ket{\alpha_i}_1\}_{i=1}^{d_1}$ and $\{\ket{\beta_j}_1\}_{j=1}^{d_2}$ be the completion of the orthogonal sets of vectors $\{\ket{\alpha_i}_1\}_{i=1}^{m}$ and $\{\ket{\beta_j}_1\}_{j=1}^{n}$ to orthonrmal bases. Following the example in the proof of Lemma~\ref{lemma::multiplicityquantum}, the collection  $\{\ket{\phi_k}\rangle \otimes\ket{\phi'_l} \}_{k,l=1}^{d_1,d_2}$ then forms an MPD set and also its corresponding MDE measurement. Since $ \ket{\psi_{\tilde{\rho}}}\rangle_{1,2}$ is probablistic with respect to this measurement, it is a superposition of some subset of $\{\ket{\phi_k}\rangle \otimes\ket{\phi'_l} \}_{k,l=1}^{d_1,d_2}$.

Next, we will show that $ \ket{\psi_{\tilde{\rho}}}\rangle_{1,2}$ cannot be a part of an MPD set. To be a part of an MPD set, one needs at least one extremal effect $f$ of $\mathcal{S}_{\mathcal{Q}_1} \boxtimes \mathcal{S}_{\mathcal{Q}_2}$ such that $\tr[f \ket{\psi_{\tilde{\rho}}}\rangle_{1,2}] = 1$. One can show that $f$ is indeed unique and must be  $\ket{\psi_{\tilde{\rho}}}\rangle_{1,2}$ itself. Since $f$ must be a valid effect of $\mathcal{S_Q}_{1,2}$ it is positive semidefinite with a spectral decomposition
\begin{equation}
    f = \sum_{k} \lambda_k \ket{\phi_k}\rangle, \quad \lambda_k \in [0,1] \ \forall\ k, \quad  \sum_{k} \lambda_k = 1, 
\end{equation}
where the second equality holds because $f$ is a part of an MDE measurement and thereofore, following Lemma~\ref{lemma::unit_traceof_effects}, $\tr[f]=1$.  With this one gets
\begin{equation}
    \begin{split}
        \tr[f\ket{\psi_{\tilde{\rho}}}\rangle_{1,2}] &= \sum_{k} \lambda_k \tr[\ket{\phi_k}\rangle \ket{\psi_{\tilde{\rho}}}_{1,2}] \leq \sum_{k} \lambda_k = 1.
    \end{split}
\end{equation}
For equality to hold, we require $\tr[\ket{\phi_k}\rangle \ket{\psi_{\tilde{\rho}}}_{1,2}] = 1$ for all $k$. Since, $f$ is Hermitian, $\{\ket{\phi_k}\rangle\}_k$ is an orthogonal set. This implies that $\lambda_k = 1$ for exactly one $k$ for which $\ket{\phi_k}\rangle = \ket{\psi_{\tilde{\rho}}}\rangle_{1,2}$. Finally since $\tilde{\rho}$ is an allowed state, from Equation~\ref{equation::witness}, $f$ cannot be an allowed effect.

\end{proof}

\subsection{Uncertainty implies Superposition: Proof of Lemma~\ref{Thm::UncertaintyImpliesSuperposition}} \label{Appendix: Proof Uncertainty implies Superposition}

\UncImpliesSup*

\begin{proof}
    (i). If there exists MDEs $\mathsf{M},\mathsf{N}$ such that $\Det(\mathsf{M}) \cap \Det(\mathsf{N}) = \emptyset$, then the set of extremal states can be partitioned into three disjoint sets:
    \begin{equation}
        \mathrm{Ext}[\mathcal{S}] = \Det(\mathsf{M}) \cup  \Det(\mathsf{N}) \cup \mathrm{Ext}[\mathcal{S}] \setminus \left( \Det(\mathsf{M}) \cup  \Det(\mathsf{N})  \right).
    \end{equation}
    If $s \in \Det(\mathsf{M})$, then $s$ is in superposition of some $\mathrm{D}' \subseteq \Det(\mathsf{N})$. The argument is symmetric for if $s \in  \Det(\mathsf{N})$. Now if $s$ is an element of the remaining set, then it is probabilistic w.r.t.\ both $\mathsf{M}$ and $\mathsf{N}$ and therefore also a superposition. This covers all extremal states. It follows that $(\mathcal{S},\mathcal{E})$ exhibits complete superposition, hence superposition exists in the theory.

    (iii). According to mutual superposition, for every MPD set $\mathsf{D}$, every state in $\mathrm{Ext}[\mathcal{S}] \setminus \mathsf{D}$ is a superposition of (some of) the elements of $\mathsf{D}$. If preparational uncertainty exists, then the set $\mathrm{Ext}[\mathcal{S}] \setminus \Det(\mathsf{M})$ is non-empty for any MDE $\mathsf{M}$---actually, for this to hold true, preparation uncertainty is not necessary: the weaker condition $|\mathrm{Ext}[\mathcal{S}]| >n $ is sufficient, where $|\cdot|$ denotes the cardinality of the set, and $n$ is the maximum number of perfectly distinguishable states (i.e.\ the size of an MPD set). Due to outcome sharpness of MDEs, we can write $\mathrm{Ext}[\mathcal{S}] \setminus \Det(\mathsf{M}) =  \mathrm{Ext}[\mathcal{S}] \setminus \mathsf{D}_{\mathsf{M}} \neq \emptyset$, where $\mathsf{D}_{\mathsf{M}}$ is the unique MPD set distinguished by $\mathsf{M}$. All elements of this set are probabilistic w.r.t.\ $\mathsf{M}$, hence a superposition of (elements of) $\mathsf{D}_{\mathsf{M}}$. This holds for all MPD sets, hence uniform superposition holds.

\end{proof}

\subsection{Restating Uncertainty in terms of Superposition: Proof of Corollary~\ref{Lem::UncertaintyAsSuperposition}} \label{Appendix: Restating uncertainty wrt superposition}

\UncertaintyAsSuperposition*

\begin{proof}
    (i). If there exists MDEs $\mathsf{M},\mathsf{N}$ such that $\Det(\mathsf{M}) \cap \Det(\mathsf{N}) = \emptyset$, then there must exist an MPD set $\mathsf{D} \subseteq \Det(\mathsf{M})$ composed entirely of states in a superposition of an MPD set $\mathsf{D}' \subseteq \Det(\mathsf{N})$. To show the converse implication, consider outcome sharpness of MDEs: we have that $\mathsf{D}= \Det(\mathsf{M})$ and $\mathsf{D}' = \Det(\mathsf{N})$. Therefore, requiring, say, $\mathsf{D}$ to be composed \textit{entirely} of states in a superposition of $\mathsf{D}'$ implies that $\Det(\mathsf{M}) \cap \Det(\mathsf{N}) = \emptyset$, hence preparational uncertainty exists in the GPT. 

    (ii). Weak uncertainty implies that for all $ \mathsf{M} $ there exists  $\mathsf{N} $ such that $\Det(\mathsf{M}) \cap \Det(\mathsf{N}) = \emptyset$. Therefore, for all MPDs $\mathsf{D} \subseteq \Det(\mathsf{M})$ there exists some MPD set $ \mathsf{D}' \subseteq \Det(\mathsf{N})$ such that each element of $\mathsf{D}'$ is probabilistic with respect to $\mathsf{M}$, hence a superposition of the elements of $\mathsf{D}$. To show the converse, we again assume outcome sharpness of MDEs: the requirement that for every MPD set $\mathsf{D}$ there is some $\mathsf{D}'$ composed entirely of states in a superposition of $\mathsf{D}$ immediately implies that for all MDEs $\mathsf{M}$ there exists an MDE $\mathsf{N}$ such that $ \Det(\mathsf{M}) \cap \Det(\mathsf{N}) = \emptyset$.

\end{proof}

\section{Non-multiplicative Operational Dimension: Composite Regular Pentagons} \label{Appendix::MinProductPentagons}
Let $(\mathcal{S}_5, \mathcal{E}_5)$ denote the state and effect spaces of a regular pentagon, cf.\ Appendix \ref{Appendix:RegularNgons}. This system possesses an operational dimension of two (see Figs.\ \ref{Fig::PentagonHexagon}-\ref{Fig::PentagonSuperposition}); in particular, the inner product $e_5(j)(\omega_5 (i))$ vanishes when $i-j= \pm 2$ (mod $5$).

Consider an arbitrary no-signalling composition $\mathcal{S}_{1,2} = \mathcal{S}_5 \boxtimes \mathcal{S}_5$ of two pentagon state-spaces, and let $\mathcal{E}_{1,2}$ be the associated maximal effect space. Products of extremal states and effects are extremal elements of $\mathcal{S}_{1,2}$ and $\mathcal{E}_{1,2}$, respectively. If the operational dimension were multiplicative, the cardinality of MPD sets in $(\mathcal{S}_{1,2}, \mathcal{E}_{1,2})$ would be four. However, we can construct a PD set of five product states that can be discriminated by an extremal measurement consisting solely of product effects. Defining $S_i \equiv \omega_5 (i) \otimes \omega_5 (2i-1)$ and $E_j \equiv e_5 (j) \otimes e_5 (2j-1)$, the set $\mathsf{D} = \{ S_1, S_2, S_3, S_4, S_5\}$ is perfectly distinguished by the measurement $\mathsf{M} = \{ E_1, E_2, E_3, E_4, E_5\} $. Note that
\[
    \sum_{j=1}^5 E_j = \frac{1}{5}  \sum_{j=1}^5
\begin{pmatrix}
r \cos\theta_j \\
r \sin\theta_j \\
1
\end{pmatrix}
\otimes
\begin{pmatrix}
r \cos\phi_j \\
r \sin\phi_j \\
1
\end{pmatrix},
\]
where $\theta_j = \frac{2\pi j}{5}$ and $\phi_j = \frac{2\pi(2j-1)}{5}$. 
By symmetry of the regular pentagon, one has
\[
\sum_{j=1}^5 \cos\theta_j = \sum_{j=1}^5 \sin\theta_j 
= \sum_{j=1}^5 \cos\phi_j = \sum_{j=1}^5 \sin\phi_j = 0,
\]
and similarly all mixed trigonometric terms vanish upon summation. Hence only the constant component survives, yielding
\[
\sum_{j=1}^5 E_j =
\begin{pmatrix}
0\\
0\\
1
\end{pmatrix}
\otimes
\begin{pmatrix}
0\\
0\\
1
\end{pmatrix}
= u \otimes u \, .
\]

By construction, $E_j(S_j)=1$ for all $j$. For the off-diagonal terms $E_j(S_i)$, where $i \neq j$, we consider two cases: (\textit{i}) if $i-j = \pm 2$, then the first factor in $E_j(S_i)$ vanishes, i.e., $e_5(j)(\omega_5 (i)) = 0$; (\textit{ii}) if $i-j =\pm 1$, then $(2i-1)-(2j-1) = \pm 2$ causing the second factor in $E_j(S_i)$ to vanish, i.e., $e_5(2j-1)(\omega_5 (2i-1)) = 0$.

Product states and effects are therefore sufficient to show that the operational dimension of an arbitrary tensor product of two regular pentagons strictly exceeds the product of the dimensions of its individual subsystems.

\section{Connections to Related Work} \label{Appendix: Related work}

In this section, we compare our operational definition of superposition (Definitions~\ref{Def::Superposition}-\ref{Def::SuperpositionPrinciples}) with previous approaches presented in \cite{dariano_classical_2020, aubrun_entanglement_2022}.

\subsection{Superposition in D'Ariano \textit{et al.}, 2020}

In \cite{dariano_classical_2020}, D'Ariano et al.\ introduce a hierarchy of three superposition principles. Similar to our construction, they consider an MPD set $\mathsf{D}=\{s_i\}_{i=1}^d$ and some measurement $\mathsf{M}=\{e_i\}_{i=1}^d$ that distinguishes them (though in their case $\mathsf{M}$ is not necessarily an MDE). Let $\mathsf{p}=\{ p_i \}_{i=1}^d$ denote a probability distribution and $r$ an extremal state, a GPT admits: (i) \textit{ultraweak superposition} iff for every choice of $\mathsf{p}$, there exist $\mathsf{M}$ and $r$ such that 
    \begin{equation} \label{eq:DAriano superposition equation}
        p_i= e_i(r) \,\,\,\, \text{for all } i=1,\dots , d
    \end{equation}
holds; (ii) \textit{weak superposition} iff for every choice of $\mathsf{p}$ and $\mathsf{M}$, there exists $r$ such that Eq.\ \eqref{eq:DAriano superposition equation} holds; (iii) \textit{strong superposition} iff for every choice of $\mathsf{p}$, there exists $r$ such that, for every choice of $\mathsf{M}$, Eq.\ \eqref{eq:DAriano superposition equation} holds.

Superposition is not introduced as a relational property between a state and a PD set; instead, it is characterised via global structural properties of the state and effect spaces. Because the proposed definitions rely on arbitrary choices for the probability vector $\mathsf{p}$, they only apply to GPTs with infinitely many extremal states. Consequently, in their approach superposition remains undefined for theories such as Spekkens' model and regular $n$-gons.

Using their notation, $(\mathcal{S},\mathcal{E})$ exhibits superposition according to our Definition~ \ref{Def::Superposition} if there exists a \textit{non-dispersion-free} distribution $\mathsf{p}$, an \textit{MDE} $\mathsf{M}$, and an extremal state $r$ such that Eq.\ \eqref{eq:DAriano superposition equation} holds. Therefore, our basic notion of superposition is implied by each of their definitions; nonetheless, our principles of complete, uniform, and mutual superposition (Definition~\ref{Def::SuperpositionPrinciples}) highlight separate structural features of the state space. Crucially, both approaches share the core premise that the operational hallmark of superposition---the operational phenomenon that defines it---is the observed probabilistic behaviour of extremal states relative to measurements that are maximally distinguishable.

The authors of \cite{dariano_classical_2020} then show that classical theories (defined by a simplicial state-space whose vertices are perfectly distinguishable) do not exhibit any of their superposition principles. Simplicial theories that are non-classical (i.e.\ in which the perfect distinguishability of the vertices is violated, thereby preclusing state tomography) fail to satisfy weak (and strong) superposition. However, an explicit example of a non-classical simplicial theory satisfying ultraweak superposition remains unknown.

\subsection{Superposition in Aubrun \textit{et al.}, 2022}
\label{Appendix::PlavalaSup}
The authors of \cite{aubrun_entanglement_2022} discuss how an operational notion of superposition in GPTs can ``\textit{stem from the comparison between the two ensembles $\{ \ket0, \ket1 \}$ and $\{ \ket+, \ket- \}$, where $\ket \pm = (\ket0 \pm \ket1)/\sqrt{2}$. The ensembles, in fact, when mixed with equal weights, give rise to the same density matrix; further, identifying the state unambiguously via a measurement is possible or impossible depending on whether the ensemble is revealed before or after the measurement is carried out.}''.

Based on this idea, they introduce the property of \textit{strong incompatibility} as ``\textit{a way to identify the existence of superpositions in an arbitrary GPT}''. A GPT satisfies strong incompatibility if there exist (unnormalised) states $0 \neq s_0, s_1, s_+, s_-$ and (unnormalised) effects $e_0, e_1, e_+, e_-$ such that (i) $s_0+s_1 = s_+ + s_-$; (ii) $e_o(s_1)=e_1(s_0)=e_+(s_-)=e_-(s_+)=0$; and (iii) $e_i+e_j$ is strictly positive, for all $i \in \{0,1\}$, $j\in \{+,-\}$. The connection to superposition arises from the parallel between the ensembles $\{s_0,s_1\}$ and $\{ s_+, s_-\}$ and the ensembles $\{ \ket0, \ket1 \}$ and $\{ \ket+, \ket- \}$, where the effects $e_0,e_1,e_+,e_-$ correspond to the rank-1 projections generated by the respective vectors.

Strong incompatibility (which is assumed to be equivalent to the existence of superposition) is shown to boil down to the non-classicality of the theory, and thus with the possibility of combining two of more of such systems with a tensor product that allows entanglement. While this notion of superposition also applies to GPTs with finitely many extremal state, it is weaker than those provided in \cite{dariano_classical_2020} and in Definition~\ref{Def::Superposition} of this paper. For instance, the gbit system exhibits this form of superposition; to show this, pick diagonal vertices for the pairs of $s_i$ states and equal mixtures of extremal effects for the $e_i$ effects. 
Furthermore, as in \cite{dariano_classical_2020}, superposition here is \textit{not} treated as a relational property between states, as is conventional in quantum theory.

The version of superposition introduced in \cite{aubrun_entanglement_2022} is implied by our own. What is more, their notion proves to be equivalent to a weakened version of Definition~\ref{Def::Superposition}, in which the state $r$ is allowed to be mixed, provided it is not a mixture of the elements of $\mathsf{D}$.
This weaker notion of superposition is formalised as follows:

\begin{definition}[Weak Superposition] 
    Let $(\mathcal{S},\mathcal{E})$ be a pair of state and effect spaces. Let $\mathsf{D} \coloneqq\{s_i\}_{i=1}^n$ be an MPD set and $\mathsf{M} \coloneq \{e_i\}_{i=1}^n$ an MDE measurement of $\mathsf{D}$. Let $r \notin \mathrm{ConvHull}[\mathsf{D}]$ be a state with the property that for some $\mathsf{M}' \subseteq \mathsf{M}$, $ e_j(r)\in (0,1)$ holds for all $e_j \in \mathsf{M}'$, and $\sum_{e_j \in \mathsf{M}'} e_j(r) =1$. Then, $r$ is a superposition of the states in the set $\mathsf {D}' \coloneqq \{s \in \mathsf D | e_j(s) =1, e_j \in \mathsf{M}' \}$. 
    \label{Def::WeakSuperposition}
\end{definition}

\begin{theorem} \label{Theorem::weak superposition equivalent to nonclassicality}
Let $(\mathcal{S},\mathcal{E})$ be a state and effect pair of a GPT. Weak superposition (as per Definition~\ref{Def::WeakSuperposition}) exists if and only if $(\mathcal{S},\mathcal{E})$ is non-classical.
\end{theorem}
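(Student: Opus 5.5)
The plan is to prove the two implications separately. Throughout, ``classical'' means what it means in the paper: the state space is a simplex whose vertices are jointly perfectly discriminable. First we simplify the defining condition of Definition~\ref{Def::WeakSuperposition}. Because $\mathsf{M}$ is a measurement, $\sum_j e_j(r)=1$ holds automatically, so one can always take $\mathsf{M}'$ to be the support of the distribution $(e_j(r))_j$. Every element of this support has value in $(0,1)$ exactly when $r$ is \emph{not} deterministic with respect to $\mathsf{M}$, i.e.\ when no $e_j(r)=1$. Weak superposition therefore exists if and only if there are an MPD set $\mathsf{D}$, an MDE $\mathsf{M}$ for it, and a state $r\notin\mathrm{ConvHull}[\mathsf{D}]$ that is non-deterministic with respect to $\mathsf{M}$.

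\emph{Classical $\Rightarrow$ no weak superposition.} The vertices of the simplex are perfectly discriminable, so the operational dimension equals the number of vertices. A PD set consists of extremal states, i.e.\ vertices, so every MPD set $\mathsf{D}$ must be the full vertex set. Then $\mathrm{ConvHull}[\mathsf{D}]=\mathcal{S}$, and no admissible $r$ exists.

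\emph{Non-classical $\Rightarrow$ weak superposition.} Fix an MPD set $\mathsf{D}=\{s_i\}_{i=1}^n$ with MDE $\mathsf{M}=\{e_i\}$.

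Step 1: show that $\mathrm{Ext}[\mathcal{S}]\supsetneq\mathsf{D}$. The elements of $\mathsf{D}$ are affinely independent: if $\sum_i q_i s_i=0$ with $\sum_i q_i=0$, applying $e_j$ gives $q_j=0$. Hence if $\mathrm{Ext}[\mathcal{S}]=\mathsf{D}$, then $\mathcal{S}$ is a simplex with perfectly discriminable vertices, i.e.\ classical. This also gives $n\ge 2$, since $n=1$ would force $\mathcal{S}$ to be a single point, which is classical.

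Step 2: pick an extremal $s\notin\mathsf{D}$ and split into two cases.

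\begin{itemize}
\item If $s$ is non-deterministic with respect to $\mathsf{M}$, take $r=s$. Since $s$ is extremal and not a vertex in $\mathsf{D}$, it does not lie in $\mathrm{ConvHull}[\mathsf{D}]$; indeed it is even a superposition in the sense of Definition~\ref{Def::Superposition}.
\item If instead $e_k(s)=1$ for some $k$, choose $j\neq k$ (possible since $n\ge2$) and set $r=\tfrac12(s+s_j)$. Then $e_k(r)=e_j(r)=\tfrac12$, so $r$ is non-deterministic. Suppose $r=\sum_i q_i s_i$. Evaluating the $e_i$ gives $q_k=q_j=\tfrac12$ and all other $q_i=0$. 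Hence $s=s_k$, contradicting $s\notin\mathsf{D}$.
\end{itemize}
In either case weak superposition exists.

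The main obstacle is the definitional bookkeeping rather than any computation. We need to pin down exactly what ``non-classical'' negates: a non-simplex, or a simplex whose vertices are not all perfectly discriminable. Step 1 must handle both cases uniformly, which the affine-independence argument does, together with the degenerate case $n=1$. I would also double-check that the MDE's extremality requirement plays no role beyond guaranteeing that an MDE exists for the chosen MPD set.
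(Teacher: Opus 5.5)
Your proof is correct and follows the paper's skeleton. The forward direction is the same: in a classical theory the only MPD set is the vertex set, so $\mathrm{ConvHull}[\mathsf{D}]=\mathcal{S}$. The converse uses the same case split on an extremal $s\notin\mathsf{D}$, followed by mixing $s$ with some $s_j$, $j\neq k$.

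Where you differ, your version is more elementary and covers more ground.
\begin{itemize}
\item The paper only treats the case ``$\mathcal{S}$ is not a simplex''. Your affine-independence step gives $\mathrm{Ext}[\mathcal{S}]\supsetneq\mathsf{D}$ for every non-classical theory. This includes simplices whose vertices are not jointly perfectly discriminable, which the paper itself elsewhere counts as non-classical.
\item To show the mixture leaves $\mathrm{ConvHull}[\mathsf{D}]$, the paper invokes hyperplane separation to get a threshold $p'\in(0,1)$ beyond which $pt+(1-p)s_i$ exits the hull. You instead evaluate the $e_i$ on a putative decomposition. That shows directly that every $p\in(0,1]$ already works. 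So the threshold is actually $p'=0$: the paper's intermediate claim is inaccurate, though its conclusion stands.
\end{itemize}

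One point needs to be made explicit: why $n=1$ forces $\mathcal{S}$ to be a single point. This is exactly where the paper's standing assumption that MDEs are informationally complete enters. For $n=1$ the only MDE is $\{u\}$, which gives identical statistics on every state.

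Without that assumption the claim fails, and so does the theorem. A segment state space with effect space $\mathrm{ConvHull}\{0,u\}$ is a simplex with non-discriminable vertices, hence non-classical, yet it admits no weak superposition. Cite the assumption at that step. The paper's proof tacitly assumes $n\geq 2$ at the same point.
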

\begin{proof}
    If $(\mathcal{S},\mathcal{E})$ exhibits weak superposition as per Definition~\ref{Def::WeakSuperposition}, then $\mathcal{S} \setminus \text{ConvHull}(\mathsf{D}) \neq \emptyset$, hence it is non-classical, since in a classical GPTs, $\mathcal{S}= \text{ConvHull}(\mathsf{D})$, where $\mathsf{D}$ is the (unique) MPD set formed by the vertices.

    For the converse, suppose $\mathcal{S}$ is not a simplex. Then, for any choice of MPD $\mathsf{D}=\{s_1,\dots,s_n\}$, there is at least one $t \in \text{Ext}[\mathsf{S}] \setminus \mathsf{D}$. Fix an MDE $\mathsf{M}=\{e_i \}_{i=1}^n$ that distinguishes $\mathsf{D}$: if $t$ is probabilistic with respect to $\mathsf{M}$, then $(\mathcal{S},\mathcal{E})$ exhibits superposition as per Definition~\ref{Def::Superposition}, hence also the weaker form of Definition~\ref{Def::WeakSuperposition}. If $t \in \Det(\mathsf{M})$, then $e_j(t)=1$ for exactly one $j\in \{1,\dots,n\}$, with $e_i(t)=0$ for all $i\neq j$. Consider the mixture $\eta_p=pt+(1-p)s_i$, with $i\neq j$. We have that $\eta_0\in \text{ConvHull}(\mathsf{D})$, $\eta_1 \notin \text{ConvHull}(\mathsf{D})$ and that $e_j(\eta_p)\in (0,1)$ as long as $p\neq 0,1$. By the hyperplane separation theorem, there exists $p' \in (0,1)$ such that $\eta_p \in \text{ConvHull}(\mathsf{D})$ for all $p\leq p'$ and $\eta_p \notin \text{ConvHull}(\mathsf{D})$ for all $p > p'$. Hence there exist non-trivial mixtures of $t$ and $s_i$ which are probabilistic with respect to $\mathsf{M}$ and lie outside $\text{ConvHull}(\mathsf{D})$. Hence $(\mathcal{S},\mathcal{E})$ exhibits weak superposition.
\end{proof}

\end{document}